\newtheorem{proposition}{Proposition}
\newtheorem{definition}{Definition}
\newtheorem{theorem}{Theorem}
\newcommand{\ket}[1]{\left\vert#1\right\rangle}
\newcommand{\bra}[1]{\left\langle#1\right\vert}
\def\bra#1{\langle #1|}
\def\ket#1{\left|#1 \right>}
\def\Tr{\mbox{Tr}}
\begin{document}
\title{Coherence, Asymmetry, and Quantum Macroscopicity}
\author{Hyukjoon Kwon}
\affiliation{Center for Macroscopic Quantum Control, Department of Physics and Astronomy, Seoul National University, Seoul, 151-742, Korea}
\author{Chae-Yeun Park}
\affiliation{Center for Macroscopic Quantum Control, Department of Physics and Astronomy, Seoul National University, Seoul, 151-742, Korea}
\affiliation{Asia Pacific Center for Theoretical Physics, Pohang, 37673, Korea}
\author{Kok Chuan Tan}
\affiliation{Center for Macroscopic Quantum Control, Department of Physics and Astronomy, Seoul National University, Seoul, 151-742, Korea}
\author{Daekun Ahn}
\affiliation{Center for Macroscopic Quantum Control, Department of Physics and Astronomy, Seoul National University, Seoul, 151-742, Korea}
\author{Hyunseok Jeong}
%\email{h.jeong37@gmail.com}
\affiliation{Center for Macroscopic Quantum Control, Department of Physics and Astronomy, Seoul National University, Seoul, 151-742, Korea}
\date{\today}

\begin{abstract}
We investigate a measure of quantum coherence and its extension to quantify quantum macroscopicity. 
The coherence measure can also quantify the asymmetry of a quantum state with respect to a given group transformation.
We then show that a weighted sum of asymmetry in each mode can be applied as a measure of macroscopic coherence.
To exclude the effects of microscopic superpositions, we suggest a method to introduce a cutoff to the weighted sum that will specify the macroscopic portion of the coherence. This cutoff may be interpreted as the fuzziness for a given measurement outcome. Based on the suggested measures, we investigate the quantum macroscopicity for particular concrete examples in $N$-partite spin systems.
\end{abstract}
\pacs{}
\maketitle

\section{Introduction}  
Quantum theory has been very  successful in describing the properties of microscopic systems based on the superposition principle. 
Quantum coherence, which has its origin in the superposition principle, provides useful operational applications in quantum key distribution \cite{BB84, Ekert91,QCRMP}, computation \cite{Feynman82, Shor97, Deutsch92, Ladd10, Tan17}, and communication \cite{Bennett92, Bennett93, Duan10}.
Recently, there have been attempts to interpret quantum properties as a resource.
Resource theories based on the notion of free operations may provide useful tools to quantify nonclassical phenomena, covering a diverse range of topics that include entanglement \cite{Plenio07, Horodecki09}, asymmetry \cite{Gour08, Marvian13, MarvianNC, Marvian14, Marvian16, Piani16, Bu17}, and quantum thermodynamics \cite{Janzing00, Brandao13, Cwiklinski15, LostaglioNC, LostaglioX}, all of which have been studied within this framework.

Baumgratz et al. \cite{Baumgratz14} were the first to propose a general framework to quantify quantum coherence using the off-diagonal elements of density matrices  defined with respect to some preferred basis. This proposal was then followed up and further developed by subsequent research \cite{Winter16, Streltsov16, Chitambar16b, Tan16}.
It was noted that the preferred basis should be carefully chosen to define physically relevant incoherent states in the resource theory \cite{Chitambar16b, Streltsov16}.
In parallel, there have also been various approaches for quantifying quantum coherence in the context of asymmetry \cite{Gour08, Marvian14, Marvian16}.
The resource theory of asymmetry was first proposed to quantify the degree of symmetry breaking of a state under a group transformation
\cite{MarvianNC, Marvian14, Marvian16}.
It has applications in reference frame alignment \cite{Skotiniotis12}, quantum metrology \cite{Giovannetti04}, and quantum speed limits \cite{MarvianQSL}.                                                                                                           
A more rigorous investigation of asymmetry can be performed by dividing its contributions into independent modes. This approach is the so-called ``modes of asymmetry,''  which allows the concept to be applied to arbitrary finite and compact Lie groups \cite{Marvian14}.
In the case of specific group translations generated by an observable with non-degenerate eigenvalues, the 
coherence defined in Ref.~\cite{Baumgratz14} coincides with the translational asymmetry assuming that the eigenbasis of the observable is taken as the preferred basis.
The relationship between these different approaches to coherence \cite{Marvian16} can also be understood by distinguishing them into {\it speakable} and {\it unspeakable} notions.

Independently, there have been a number of studies that considered the notion of non-classicality  with macroscopic superpositions. A quantum superposition in microscopic scales can be observable today, but the question of the superposition principle in the macroscopic world remains, as  illustrated by Schr\"odinger's famous cat \cite{Schrodinger35}.
In this regard, various experiments have been performed to generate and observe such large-size quantum superpositions \cite{Brune96, Monroe96, Arndt99, Ourjoumtsev07, Schoelkopf14, Kasevich15}.
To quantify the size of a macroscopic superposition, or the so-called  `quantum macroscopicity', 
a number of proposals have been made \cite{Lee11, Frowis12, Nimmrichter13, Jeong15, Park16, Yadin16, Kwon16, Frowis17}, e.g., based on phase space structures \cite{Lee11, Park16}, quantum Fisher information \cite{Frowis12}, and the minimal extension of quantum theory \cite{Nimmrichter13}. 
Recently, an axiomatic approach to quantum macroscopicity that uses a similar framework to that of coherence~\cite{Baumgratz14} was also suggested \cite{Yadin16}, but it was shown to be not sufficient for certain types of states \cite{Kwon16}, although they may still be necessary.

In this paper, we provide an understanding of quantum macroscopicity through the lens of coherence and asymmetry. 
As a consequence, our study provides an avenue for the interpretation of quantum macroscopicity as a simple weighted sum of coherence and asymmetry.
We first suggest a coherence measure that satisfies the conditions in Ref.~\cite{Baumgratz14}.
It quantifies the asymmetry of a state when the eigenbasis of a group generator is taken as the preferred basis.
Based on the intuition from the phase-space-based measure of quantum macroscopicity \cite{Lee11, Park16}, we then introduce a general form of macroscopic coherence by taking sums of the form (effective size) $\times$ (degree of coherence) for all possible modes. 
Interestingly, by taking a specific form for the effective size of modes based on eigenvalue spacing of a given observable, our measure becomes an asymmetry measure with respect to a group transformation given by the observable.

Furthermore, we point out that a collective observable in a macroscopic system, otherwise called a {\it macroscopic observable} \cite{Poulin05} gives rise to the measure of quantum macroscopicity satisfying all the conditions in Ref.~\cite{Yadin16}.
We also discuss how microscopic coherences of a product state can be distinguished from macroscopic superposition in order to overcome inconsistencies of the framework suggested in Ref.~\cite{Yadin16}.
To do this, we introduce a scale parameter that reflects the inherent fuzziness of a measurement. 
By adjusting the parameter, we demonstrate that microscopic coherences from all product states can be excluded and only the contribution from macroscopic superposition remains.
Finally, we apply our measures in $N$-particle spin systems to demonstrate the validity of our approach, and we also study how decoherence can negatively impact quantum macroscopicity.
We emphasize that 
our approach provides an intuitive expression of quantum macroscopicity by capturing both the ``degree of coherence (quantum)" and the ``size of the system (macroscopicity)" in a single measure that simulaneously fits within the framework of quantum resource theories.

\section{Quantifying Quantum coherence}
\subsection{Quantifying coherence}
The resource theory of coherence can be constructed by a set of incoherent states ${\cal I}$ and incoherent operations ${\cal E}_{\rm IC}$ \cite{Baumgratz14}.
For a given basis $\{ \ket{i} \}$, an incoherent state $\hat{\delta} \in {\cal I} $ contains only diagonal terms, i.e. $\hat{\delta} = \sum_i p_i \ket{i} \bra{i}$, where $p_i \geq 0$ and $\sum_i p_i =1$.
An incoherent operation can be characterized by  Kraus operators as ${\cal E}_{\rm IC}(\hat{{\rho}}) = \sum_n \hat{K}_n \hat{{\rho}} \hat{K}^\dagger_n$, where $\hat{K}_n {\cal I} \hat{K}_n^\dagger \subseteq {\cal I}$ and $\sum_n \hat{K}_n^\dagger \hat{K}_n = {\mathbb 1}$.
The conditions that should be satisfied by a coherence measure ${\cal C}({\hat{{{\rho}}}})$ based on Ref.~\cite{Baumgratz14} are as follows.
(C1) ${\cal C}( \hat{\delta}) \geq 0$ and ${\cal C}( \hat{\delta}) = 0$ if and only if $ \hat{\delta} \in {\cal I} $.
(C2) Monotonicity under both
(C2a) a trace-preserving incoherent operation  ${\cal C}({\hat{\rho}}) \geq {\cal C}({\cal E}_{\rm IC} ({\hat{\rho}}))$ and
(C2b) a selective operation in average  ${\cal C}({\hat{\rho}}) \geq \sum_n p_n {\cal C}(\hat{K}_n {\hat{\rho}} \hat{K}^\dagger_n / p_n)$, where $p_n = {\rm Tr}\hat{K}_n^\dagger {\hat{\rho}} \hat{K} $.
(C3) Convexity  $\sum_i p_i {\cal C}({\hat{\rho}}_i) \geq {\cal C}(\sum_i p_i {\hat{\rho}}_i)$.

The overall sum of coherence between all the basis states, say $\ket{i}$ and $\ket{j}$ with $i\neq j$, quantifies quantum coherence of the state under consideration.
In this manner, the $l_1$-norm ${\cal C}_{l_1}({\hat{\rho}}) = \sum_{i \neq j} |{{\rho}}_{ij}|$, where ${{\rho}}_{ij} = \bra{i} {\hat{\rho}} \ket{j}$ was suggested as a proper measure of coherence and shown to satisfy (C1)--(C3) \cite{Baumgratz14}.
The $l_2$-norm given by ${\cal C}_{l_2}({\hat{\rho}}) = \sum_{i \neq j} |{{\rho}}_{ij}|^2$, however, does not satisfy (C2b) \cite{Baumgratz14}.
A geometric measure, which quantifies how far the state is apart from incoherent states, can also be a coherence measure.
For instance, the quantum relative entropy ${\cal C}_R({\hat{\rho}}) = \min_{\hat{\delta} \in {\cal I}} S({\hat{\rho}} || \hat{\delta}) = S({\hat{\rho}} || {\hat{\rho}}_{\rm diag})$ is a relevant measure of coherence, where ${\hat{\rho}}_{\rm diag} = \sum_i {{\rho}}_{ii} \ket{i}\bra{i}$ and $S({\hat{\rho}}||\hat\tau) = \Tr ({\hat{\rho}} \ln {\hat{\rho}} - {\hat{\rho}} \ln \hat\tau)$.
However, some geometric measures of coherence based on the Bures distance and the Hilbert-Schmidt norm have been found not to satisfy condition (C2b) \cite{Shao15}.
Recently, noticing the connection between entanglement and coherence \cite{Streltsov15}, a geometric measure of the type ${\cal C}_F({\hat{\rho}}) := 1-\max_{\hat{\delta} \in {\cal I}} {F}({\hat{\rho}}, \hat{\delta})$ has been proven to satisfy (C1)-(C3), where ${F}({\hat{\rho}}, \hat\tau) = [\Tr \sqrt{\sqrt{{\hat{\rho}}} \hat\tau \sqrt{{\hat{\rho}}}}]^2$ is the fidelity between quantum states.

\subsection{Coherence measure based on quantum affinity}
The similarity of information-theoretical properties between fidelity ${F}({\hat{\rho}},\hat\tau)$ and quantum affinity $A({\hat{\rho}},\hat\tau)^2 = [\Tr \sqrt{{\hat{\rho}}}\sqrt{\hat\tau}]^2$ was studied in Ref.~\cite{Luo04}.
Based on this observation, we introduce the following coherence measure,
\begin{equation}
{\cal C}_a({\hat{\rho}}) = 1- \max_{\hat{\delta} \in {\cal I}} A({\hat{\rho}},\hat{\delta})^2.
\end{equation}
Equivalent expressions of this measure are
\begin{equation}
\label{CMdef}
{\cal C}_a({\hat{\rho}}) = \sum_{i \neq j} |(\sqrt{\hat{\rho}})_{ij}|^2 = 1 - \sum_i (\sqrt{\hat{\rho}})_{ii}^2.
\end{equation}
This can be shown as follows. From the definition of the incoherent state, $\hat{\delta} = \sum_i p_i \ket{i} \bra{i}$, we have
$A({\hat{\rho}}, \hat{\delta})
= \sum_{i} \sqrt{p_i} (\sqrt{{\hat{\rho}}})_{ii} \leq \sqrt{\sum_i p_i} \sqrt{\sum_i (\sqrt{{\hat{\rho}}})_{ii}^2} = \sqrt{\sum_i (\sqrt{{\hat{\rho}}})_{ii}^2}$
by Cauchy-Schwartz inequality. The equality is achieved when $p_i = (\sqrt{\hat{\rho}})_{ii}^2 / \sum_i (\sqrt{\hat{\rho}})_{ii}^2$.
Hence we get $\max_{\hat{\delta} \in {\cal I}} A({\hat{\rho}}, \hat{\delta})^2 = \sum_i (\sqrt{\hat{\rho}})_{ii}^2 = 1 - {\cal C}_a({\hat{\rho}})$, which completes the proof.

These equivalent expressions show that the measure ${\cal C}_a$ captures the properties of both an interference-based and a geometric-based measure of coherence.
An interesting remark is that even though the $l_2$-norm for $\hat{\rho}$ does not satisfy condition (C2b),
the $l_2$-norm for $\sqrt{\hat{\rho}}$ obeys the condition.
Moreover, the measure ${\cal C}_a$ is bounded between $0$ and $1$.
The measure can also be efficiently computable since the coherence of a state in any given basis can be obtained by computing only the diagonal terms $(\sqrt{{\hat{\rho}}})_{ii}$.
Finally, the following theorem verifies that the measure ${\cal C}_a$ is a proper measure of quantum coherence.

\begin{theorem} [Affinity-based measure of coherence] With respect to a basis set $\{ \ket{i} \}$, ${\cal C}_a({\hat{\rho}}) = 1- \max_{\hat{\delta} \in {\cal I}}A^2({\hat{\rho}},\hat{\delta}) = \sum_{i \neq j} |(\sqrt{\hat{\rho}})_{ij}|^2$
is a measure of quantum coherence satisfying conditions (C1) -- (C3).
\label{Ca}
\end{theorem}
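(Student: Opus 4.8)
The plan is to verify conditions (C1)--(C3) in turn, using the closed-form expression $\mathcal{C}_a(\hat\rho) = \sum_{i\neq j}|(\sqrt{\hat\rho})_{ij}|^2 = 1 - \sum_i (\sqrt{\hat\rho})_{ii}^2$ already derived in the excerpt. Condition (C1) is the easiest: since $\sqrt{\hat\rho}$ is positive semidefinite, $\sum_{i\neq j}|(\sqrt{\hat\rho})_{ij}|^2 \ge 0$, with equality precisely when $\sqrt{\hat\rho}$ is diagonal in $\{\ket{i}\}$, which happens iff $\hat\rho$ is diagonal, i.e. $\hat\rho \in \mathcal{I}$. Convexity (C3) is most cleanly obtained from the geometric form $\mathcal{C}_a(\hat\rho) = 1 - \max_{\hat\delta\in\mathcal{I}} A^2(\hat\rho,\hat\delta)$: I would invoke joint concavity of the affinity $A(\hat\rho,\hat\tau)^2 = [\Tr\sqrt{\hat\rho}\sqrt{\hat\tau}]^2$ in its arguments (a known property, analogous to that of fidelity, from Ref.~\cite{Luo04}), so that $\max_{\hat\delta} A^2(\sum_i p_i\hat\rho_i,\hat\delta) \ge \sum_i p_i \max_{\hat\delta}A^2(\hat\rho_i,\hat\delta)$ by choosing $\hat\delta = \sum_i p_i \hat\delta_i^{\star}$ with each $\hat\delta_i^\star$ optimal for $\hat\rho_i$; taking $1$ minus this reverses the inequality and gives convexity.

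The substantive work is the monotonicity condition (C2), and within it (C2b) — the average-nonincrease under a selective incoherent operation — is the step I expect to be the main obstacle, exactly as the excerpt flags that the naive $l_2$-norm of $\hat\rho$ fails here. The strategy is to reduce (C2a) to (C2b) together with (C3) by the standard argument: a trace-preserving incoherent channel followed by forgetting the outcome is the convex combination $\mathcal{E}_{\rm IC}(\hat\rho) = \sum_n \hat{K}_n\hat\rho\hat{K}_n^\dagger$, so $\mathcal{C}_a(\mathcal{E}_{\rm IC}(\hat\rho)) = \mathcal{C}_a(\sum_n p_n \hat\rho_n) \le \sum_n p_n \mathcal{C}_a(\hat\rho_n) \le \mathcal{C}_a(\hat\rho)$ where $\hat\rho_n = \hat{K}_n\hat\rho\hat{K}_n^\dagger/p_n$, using (C3) then (C2b). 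So everything rests on proving (C2b).

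For (C2b) the key structural fact about incoherent Kraus operators is that each $\hat{K}_n$ has at most one nonzero entry in each column: $\hat{K}_n = \sum_i c_{n,i}\,\pjct{f_n(i)}{i}$ for some function $f_n$ on the index set and coefficients $c_{n,i}$. The plan is first to relate $\sqrt{\hat{K}_n\hat\rho\hat{K}_n^\dagger}$ to $\sqrt{\hat\rho}$: writing $\hat{K}_n\hat\rho\hat{K}_n^\dagger = (\hat{K}_n\sqrt{\hat\rho})(\hat{K}_n\sqrt{\hat\rho})^\dagger$, a polar-decomposition argument gives $\sqrt{\hat{K}_n\hat\rho\hat{K}_n^\dagger} = \hat{K}_n\sqrt{\hat\rho}\,\hat{U}_n^\dagger$ for some partial isometry $\hat{U}_n$, hence $p_n = \Tr\hat{K}_n\hat\rho\hat{K}_n^\dagger = \Tr\hat\rho\hat{K}_n^\dagger\hat{K}_n$. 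I would then compute the diagonal elements $(\sqrt{\hat\rho_n})_{jj}$ and bound $\sum_j p_n (\sqrt{\hat\rho_n})_{jj}^2$ from below by $\sum_i(\sqrt{\hat\rho})_{ii}^2$; equivalently, using $\mathcal{C}_a = 1 - \sum_i(\sqrt{\hat\rho})_{ii}^2$, it suffices to show $\sum_i (\sqrt{\hat\rho})_{ii}^2 \le \sum_n p_n\sum_j (\sqrt{\hat\rho_n})_{jj}^2$. The needed inequality should follow from a Cauchy--Schwarz estimate applied to $\sum_i(\sqrt{\hat\rho})_{ii}^2 = \sum_i \big(\sum_n \text{(contribution of $K_n$ to the $i$-th diagonal)}\big)^2$, combined with $\sum_n\hat{K}_n^\dagger\hat{K}_n = \mathbb{1}$ to control the cross terms; this mirrors the trick that makes the $l_2$-norm of $\sqrt{\hat\rho}$ (unlike that of $\hat\rho$) well-behaved, since $\sqrt{\hat\rho}$ scales linearly under the relevant transformations whereas $\hat\rho$ does not. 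The delicate point, and where I would spend the most care, is handling the partial isometry $\hat{U}_n$ and the possibility that different $i$ are mapped to the same $j$ by $f_n$, making sure the diagonal bookkeeping and the Cauchy--Schwarz application are tight enough to yield the claimed monotonicity rather than a weaker constant.
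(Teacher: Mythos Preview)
Your plan for (C2b) is where the real gap lies. The polar-decomposition idea $\sqrt{\hat K_n\hat\rho\hat K_n^\dagger}=\hat K_n\sqrt{\hat\rho}\,\hat U_n^\dagger$ introduces a partial isometry $\hat U_n$ that depends on both $\hat\rho$ and $\hat K_n$ and has no incoherent structure whatsoever; consequently $(\sqrt{\hat\rho_n})_{jj}=p_n^{-1/2}\bra{j}\hat K_n\sqrt{\hat\rho}\,\hat U_n^\dagger\ket{j}$ mixes arbitrary matrix elements of $\sqrt{\hat\rho}$, and the ``diagonal bookkeeping'' you propose does not close. The sentence ``$\sum_i(\sqrt{\hat\rho})_{ii}^2=\sum_i\big(\sum_n\text{(contribution of }K_n\text{ to the }i\text{-th diagonal)}\big)^2$'' is not well-defined: $(\sqrt{\hat\rho})_{ii}$ has no natural decomposition over the Kraus index $n$, and the heuristic that ``$\sqrt{\hat\rho}$ scales linearly'' is false in general, since $\sqrt{\hat K_n\hat\rho\hat K_n^\dagger}$ is not linear in $\sqrt{\hat\rho}$. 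The paper avoids all of this by never opening up the closed form for (C2b). Instead it proves a lemma valid for \emph{arbitrary} Kraus operators, $A(\hat\rho,\hat\tau)\le\sum_n A(\hat K_n\hat\rho\hat K_n^\dagger,\hat K_n\hat\tau\hat K_n^\dagger)$, via a Stinespring dilation and the known monotonicity of affinity under partial trace and pinching. The incoherent structure of $\hat K_n$ is then used only trivially, to guarantee $\hat K_n\hat\delta^*\hat K_n^\dagger/q_n\in\mathcal I$; a short chain using Cauchy--Schwarz on $\sum_n q_n^{-1}A_n^2\ge(\sum_n A_n)^2$ finishes the argument.

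There is a smaller issue in your (C3) step as well. What Lieb's theorem (and Ref.~\cite{Luo04}) give directly is joint concavity of the affinity $A(\hat\rho,\hat\tau)=\Tr\sqrt{\hat\rho}\sqrt{\hat\tau}$, not of $A^2$. From concavity of $A$ you get $A\big(\sum_ip_i\hat\rho_i,\sum_ip_i\hat\delta_i^\star\big)\ge\sum_ip_iA(\hat\rho_i,\hat\delta_i^\star)$, but squaring and then comparing with $\sum_ip_iA^2(\hat\rho_i,\hat\delta_i^\star)$ runs into Jensen the wrong way. The paper instead applies Lieb's concavity theorem directly to the closed form: each term $\Tr\sqrt{\hat\rho}\,\hat P_i\sqrt{\hat\rho}\,\hat P_i=(\sqrt{\hat\rho})_{ii}^2$ is concave in $\hat\rho$, so $\mathcal C_a(\hat\rho)=1-\sum_i(\sqrt{\hat\rho})_{ii}^2$ is convex without any detour through joint concavity of $A^2$. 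Your reduction of (C2a) to (C2b)+(C3) is fine and standard; the paper happens to prove (C2a) separately via the data-processing inequality for affinity, but that is a matter of taste.
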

\begin{proof}
(C1)  can be easily checked that  ${\cal C}_a({\hat{\rho}}) = 0$ iff ${\hat{\rho}}$ only contains diagonal terms.
(C2a) can be proven by using the property $A({\hat{\rho}},\hat\tau) \leq A( {\cal E} ({\hat{\rho}}),{\cal E} (\hat\tau))$ for any trace-preserving map ${\cal E}$.
We then have ${\cal C}_a({\hat{\rho}}) = 1 - A({\hat{\rho}}, \hat{\delta}^*)^2 \geq 1 - A({\cal E}_{\rm IC}({\hat{\rho}}), {\cal E}_{\rm IC}(\hat{\delta}^*))^2 \geq 1 -  \max_{\hat{\delta} \in {\cal I}}A({\cal E}_{\rm IC} ({\hat{\rho}}), \hat{\delta})^2 = {\cal C}_a({\cal E}_{\rm IC}({\hat{\rho}}))$, where $\hat{\delta}^*$ maximizes $A({\hat{\rho}}, \hat{\delta})$ for $\hat{\delta} \in {\cal I}$.
(C2b) can be proven by showing $\sum_n p_n {\cal C}_a(\hat{K}_n {\hat{\rho}} \hat{K}^\dagger_n /p_n) \leq {\cal C}_a({\hat{\rho}})$, for an incoherent operator set $\{\hat{K}_n\}$.
 We first show that
$A({\hat{\rho}}, \hat\tau) \leq \sum_n A(\hat{K}_n {\hat{\rho}} \hat{K}^\dagger_n, \hat{K}_n \hat\tau \hat{K}^\dagger_n)$ for Kraus operators with $\sum_n \hat{K}^\dagger_n \hat{K}_n = {\mathbb 1}$ (see Appendix \ref{AppA}).
We then have 
\begin{equation}
\begin{aligned}
\sum_n p_n {\cal C}_a(\hat{K}_n {\hat{\rho}} \hat{K}^\dagger_n / p_n)
&= 1 - \sum_n p_n \max_{\hat{\delta}_n \in {\cal I}} A(\hat{K}_n {\hat{\rho}} \hat{K}^\dagger_n/p_n, \hat{\delta}_n)^2 \\
&\leq  1 - \sum_n A(\hat{K}_n {\hat{\rho}} \hat{K}^\dagger_n, \hat{K}_n \hat{\delta}^* \hat{K}^\dagger_n / q_n)^2 \\
&=  1 - \sum_n \frac{1}{q_n} A(\hat{K}_n {\hat{\rho}} \hat{K}^\dagger_n, \hat{K}_n \hat{\delta}^* \hat{K}^\dagger_n)^2 \\
&\leq  1 - \left[ \sum_n A(\hat{K}_n {\hat{\rho}} \hat{K}^\dagger_n, \hat{K}_n \hat{\delta}^* \hat{K}^\dagger_n) \right]^2 \\
&\leq  1 - A({\hat{\rho}},\hat{\delta}^*)^2 \\
&={\cal C}_a({\hat{\rho}}),
\end{aligned}
\end{equation}
where $q_n = \Tr \hat{K}_n \hat{\delta}^* \hat{K}^\dagger_n$, and $\hat\delta^*$ gives the maximum value of $A({\hat{\rho}},\hat{\delta})$ for $\hat{\delta} \in {\cal I}$.
Finally, (C3) can be proven by noticing that ${\cal C}_a({\hat{\rho}}) = 1 -  \sum_i  (\sqrt{\hat{\rho}})_{ii}^2 = 1 - \sum_i \Tr \sqrt{\hat{\rho}} \hat{P}_i \sqrt{\hat{\rho}} \hat{P}_i$, where $\hat{P}_i = \ket{i} \bra{i}$.
According to Lieb's concavity theorem \cite{Lieb73}, $\Tr \sqrt{\hat{\rho}} \hat{P}_i \sqrt{\hat{\rho}} \hat{P}_i$ is  then concave in ${\hat{\rho}}$ for all $i$, which makes ${\cal C}_a({\hat{\rho}})$ convex.
\end{proof}

\section{Quantum coherence and asymmetry}
\subsection{Quantifying asymmetry}

A resource theory of asymmetry can be constructed via translationally covariant operations \cite{Gour08, MarvianNC}.
In quantum mechanics, $U(1)$ group translations can be generated by a given observable $\hat{L}$ via
\begin{equation}
{\cal U}_x({\hat{\rho}}) = \hat{U}_x {\hat{\rho}} \hat{U}_x^\dagger,
\end{equation}
where $\hat{U}_x = e^{- i x \hat{L}}$ and $x  \in {\rm I\!R}$.
An observable $\hat{L}$ can be expressed using the eigen-decomposition $\hat{L} = \sum_i \lambda_i \ket{i} \bra{i}$, where $\lambda_i$ assigns some physical quantity to the specific eigenstate $\ket{i}$.
For instance, if $\hat{L}$ is the Hamiltonian, $\lambda_i$ refers to an energy eigenvalue of the system, and the related group operation is a time translation.
With respect to the group translations $\hat{U}_x$, we can specify free states and free operations from a resource theoretic viewpoint.
The free states are in this case the translationally covariant states satisfying ${\cal U}_x (\hat{\rho}) = \hat\rho$,
and the free operations ${\cal E}$ are the translationally covariant operations satisfying
${\cal U}_x \circ {\cal E} = {\cal E} \circ {\cal U}_x $ for all $x$ \cite{MarvianNC, Marvian16}.

The degree of asymmetry can be quantified by some measure $\cal A(\hat\rho)$ which gives ${\cal A}(\hat\rho) = 0$ for translationally-covariant states and also monotonically decreases under the translationally-covariant operations \cite{Gour08}.
Examples of such measures of asymmetry include information based measures \cite{MarvianNC, Marvian16, Piani16, Bu17}, robustness of asymmetry \cite{Piani16}, and asymmetry-weight \cite{Bu17}.

In this paper, we introduce an interference-based measure of asymmetry by adapting the techniques discussed in the previous section.
\begin{theorem} [Interference-based measure of asymmetry] For a given observable $\hat{L} = \sum_i \lambda_i \ket{i} \bra{i}$,
${\cal A}_a({\hat{\rho}}) = \sum_{\lambda_i \neq \lambda_j} |(\sqrt{\hat{\rho}})_{ij}|^2$
is a measure of asymmetry with respect to $\hat{L}$.
\label{Aa}
\end{theorem}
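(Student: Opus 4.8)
The plan is to reduce Theorem~\ref{Aa} to Theorem~\ref{Ca} by first recasting ${\cal A}_a$ as a geometric ``distance from the symmetric states'' measure and then transcribing the argument that established Theorem~\ref{Ca}. First I would record the standard fact that ${\cal U}_x(\hat\sigma)=\hat\sigma$ for all $x$ if and only if $[\hat\sigma,\hat{L}]=0$, equivalently that $\hat\sigma$ is block diagonal with respect to the eigenspaces of $\hat{L}$: writing $\hat{P}_\ell=\sum_{i:\,\lambda_i=\ell}\ket{i}\bra{i}$ for the projector onto the $\ell$-eigenspace, the translationally-covariant states are exactly those with $\hat\sigma=\sum_\ell\hat{P}_\ell\hat\sigma\hat{P}_\ell$. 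Throughout I will use that the positive square root respects this block structure, so $\sqrt{\hat\sigma}=\sum_\ell\sqrt{\hat{P}_\ell\hat\sigma\hat{P}_\ell}$ whenever $\hat\sigma$ is covariant, and conversely $\hat\rho=(\sqrt{\hat\rho})^2$ is covariant whenever $\sqrt{\hat\rho}$ is block diagonal. (When $\hat{L}$ is non-degenerate the blocks are one-dimensional and ${\cal A}_a$ collapses to ${\cal C}_a$, which is why the same toolbox applies.)

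The key step is to prove the representation
\begin{equation}
{\cal A}_a(\hat\rho)=1-\max_{\hat\sigma}A(\hat\rho,\hat\sigma)^2,
\end{equation}
where the maximum runs over translationally-covariant density operators $\hat\sigma$. Since such $\hat\sigma$ is block diagonal, $A(\hat\rho,\hat\sigma)=\Tr\sqrt{\hat\rho}\sqrt{\hat\sigma}=\sum_\ell\Tr[(\hat{P}_\ell\sqrt{\hat\rho}\hat{P}_\ell)\sqrt{\hat{P}_\ell\hat\sigma\hat{P}_\ell}]$. Applying the Cauchy--Schwarz inequality for the Hilbert--Schmidt inner product within each block, and then a second Cauchy--Schwarz across the blocks together with $\sum_\ell\Tr(\hat{P}_\ell\hat\sigma\hat{P}_\ell)=1$, bounds $A(\hat\rho,\hat\sigma)^2$ by $\sum_\ell\Tr[\hat{P}_\ell\sqrt{\hat\rho}\hat{P}_\ell\sqrt{\hat\rho}\hat{P}_\ell]=\sum_{\lambda_i=\lambda_j}|(\sqrt{\hat\rho})_{ij}|^2=1-{\cal A}_a(\hat\rho)$, where the last equality uses $\Tr\hat\rho=\sum_{ij}|(\sqrt{\hat\rho})_{ij}|^2=1$. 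Equality is attained by taking $\sqrt{\hat{P}_\ell\hat\sigma\hat{P}_\ell}$ proportional to the positive operator $\hat{P}_\ell\sqrt{\hat\rho}\hat{P}_\ell$ with a single constant fixed by normalization, exactly the analogue of the optimal incoherent state found below Eq.~(\ref{CMdef}).

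From this representation the defining properties of an asymmetry measure follow just as for ${\cal C}_a$. Faithfulness: ${\cal A}_a(\hat\rho)=0$ forces $(\sqrt{\hat\rho})_{ij}=0$ whenever $\lambda_i\neq\lambda_j$, so $\sqrt{\hat\rho}$ and hence $\hat\rho$ is block diagonal, i.e. translationally covariant, and the converse is immediate. Monotonicity: if ${\cal E}$ is a trace-preserving covariant operation, then ${\cal U}_x\circ{\cal E}={\cal E}\circ{\cal U}_x$ forces ${\cal E}$ to map covariant states to covariant states, so with $\hat\sigma^*$ the optimizer for $\hat\rho$ and using $A(\hat\rho,\hat\sigma^*)\leq A({\cal E}(\hat\rho),{\cal E}(\hat\sigma^*))$ one gets ${\cal A}_a(\hat\rho)=1-A(\hat\rho,\hat\sigma^*)^2\geq1-\max_{\hat\sigma}A({\cal E}(\hat\rho),\hat\sigma)^2={\cal A}_a({\cal E}(\hat\rho))$; an on-average version under selective covariant operations, if demanded, follows by the argument used for (C2b). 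Convexity is likewise inherited by writing ${\cal A}_a(\hat\rho)=1-\sum_\ell\Tr[\sqrt{\hat\rho}\hat{P}_\ell\sqrt{\hat\rho}\hat{P}_\ell]$ and invoking Lieb's concavity theorem blockwise, exactly as in the proof of Theorem~\ref{Ca}. The main obstacle is the middle step: the eigenvalue degeneracies must be tracked carefully so that the maximization really ranges over all covariant states, and one must check that the nested within-block and across-block Cauchy--Schwarz estimates can be saturated simultaneously; once the geometric form is in hand, everything else is a transcription of the proof of Theorem~\ref{Ca}.
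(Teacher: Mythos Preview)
Your proposal is correct and follows essentially the same route as the paper's proof in Appendix~\ref{AppB}: both establish the geometric representation ${\cal A}_a(\hat\rho)=1-\max_{\hat\sigma\text{ covariant}}A(\hat\rho,\hat\sigma)^2$ by identifying the optimal covariant state block by block, and then transcribe the proof of Theorem~\ref{Ca}. The only cosmetic difference is that the paper diagonalizes each block $\hat P_\ell\sqrt{\hat\rho}\,\hat P_\ell$ to reduce to the non-degenerate case, whereas you work basis-free with a nested Hilbert--Schmidt Cauchy--Schwarz; the resulting optimizer $\hat P_\ell\hat\sigma\hat P_\ell\propto(\hat P_\ell\sqrt{\hat\rho}\,\hat P_\ell)^2$ is the same in both treatments.
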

The proof can be found in Appendix \ref{AppB}.
This measure depends on the amplitude of interference between the different eigenstates of $\hat{L}$, and it also includes the contribution by the degenerate eigenvalues of $\lambda_i$.

\subsection{Modes of asymmetry}

The asymmetry of quantum states has been studied more finely by decomposing a quantum state into modes defined with respect to the eigenvalue spacing  of the observable $\hat{L}$  \cite{Marvian14}.
Analogous to the Fourier decompositions in the classical signal processing, we can decompose quantum states and channels into different modes.
Through this, each particular mode can have its own structure, thus allowing asymmetry to be quantified in a more fine-grained manner \cite{Marvian14}.

Every quantum state $\hat\rho$ can be expressed as the following sum over modes defined via the distance $\omega$
\begin{equation}
\hat\rho = \sum_{\omega \in \Omega} {\hat{\rho}}^{(\omega)},
\end{equation}
where
${\hat{\rho}}^{(\omega)} = \sum_{\lambda_i - \lambda_j = \omega} {{\rho}}_{ij} \ket{i}\bra{j}$
and $\Omega$ is a set composed of every possible spacing between the eigenvalues (i.e. $\omega = \lambda_i - \lambda_j$) of the observable $\hat{L}$.
Using this mode decomposition, an equivalent expression of the aforementioned free states is given by ${\hat{\rho}} = {\hat{\rho}}^{(0)}$, for which interference between the different eigenvalue spacings does not exist \cite{Marvian14, Marvian16, Yadin16}.
The following is an alternative definition of free operations.
\begin{proposition}[Covariant operations for modes of asymmetry]
\label{ECov}
A quantum operation ${\cal E}$ is a translationally covariant operation if and only if ${\cal E}$ satisfies ${\cal E}(\hat{\rho}^{(\omega)}) = {\cal E}(\hat{\rho})^{(\omega)}$ for every mode $\omega$.
\end{proposition}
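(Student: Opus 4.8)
The plan is to exploit the fact that the group translations $\mathcal{U}_x$ act in a particularly simple way on the mode components, and that the mode projectors can be recovered from the family $\{\mathcal{U}_x\}$ by a Fourier-type inversion. First I would record the elementary identity $\mathcal{U}_x(\hat{\rho}^{(\omega)}) = e^{-ix\omega}\hat{\rho}^{(\omega)}$, which follows immediately from $\hat{U}_x\ket{i}\bra{j}\hat{U}_x^\dagger = e^{-ix(\lambda_i-\lambda_j)}\ket{i}\bra{j}$ together with the definition $\hat{\rho}^{(\omega)} = \sum_{\lambda_i-\lambda_j=\omega}\rho_{ij}\ket{i}\bra{j}$. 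Summing over $\omega\in\Omega$ yields $\mathcal{U}_x(\hat{\rho}) = \sum_{\omega\in\Omega}e^{-ix\omega}\hat{\rho}^{(\omega)}$, so the translations are ``diagonal'' in the mode decomposition with phases $e^{-ix\omega}$. This is the only structural input needed.

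Next I would express the mode projector $\mathcal{P}_\omega:\hat{\rho}\mapsto\hat{\rho}^{(\omega)}$ as a limit of linear combinations of the $\mathcal{U}_x$. Since $\lim_{T\to\infty}\frac{1}{2T}\int_{-T}^{T}e^{ix(\omega-\omega')}\,dx = \delta_{\omega,\omega'}$, one obtains the superoperator identity $\mathcal{P}_\omega = \lim_{T\to\infty}\frac{1}{2T}\int_{-T}^{T}e^{ix\omega}\,\mathcal{U}_x\,dx$; in finite dimension $\Omega$ is a finite set, so one may equivalently pick finitely many sample points $x_k$ and coefficients $c_k^{(\omega)}$ with $\mathcal{P}_\omega = \sum_k c_k^{(\omega)}\mathcal{U}_{x_k}$, the solvability of which is a Vandermonde-type statement. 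With this in hand the proposition reduces to showing that $\mathcal{E}$ commutes with every $\mathcal{U}_x$ if and only if it commutes with every $\mathcal{P}_\omega$.

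For the ``only if'' direction, assuming $\mathcal{U}_x\circ\mathcal{E}=\mathcal{E}\circ\mathcal{U}_x$ for all $x$, linearity (and, if one uses the integral form, continuity) of $\mathcal{E}$ together with the representation of $\mathcal{P}_\omega$ above gives $\mathcal{E}(\hat{\rho}^{(\omega)}) = \mathcal{E}(\mathcal{P}_\omega\hat{\rho}) = \mathcal{P}_\omega\mathcal{E}(\hat{\rho}) = [\mathcal{E}(\hat{\rho})]^{(\omega)}$. For the ``if'' direction, assuming $\mathcal{E}(\hat{\rho}^{(\omega)}) = [\mathcal{E}(\hat{\rho})]^{(\omega)}$ for every $\omega$, I would compute $\mathcal{E}(\mathcal{U}_x(\hat{\rho})) = \sum_\omega e^{-ix\omega}\mathcal{E}(\hat{\rho}^{(\omega)}) = \sum_\omega e^{-ix\omega}[\mathcal{E}(\hat{\rho})]^{(\omega)} = \mathcal{U}_x(\mathcal{E}(\hat{\rho}))$, where the first and last equalities use the mode-diagonal action of $\mathcal{U}_x$ and the middle one is the hypothesis; since this holds for all $x$, $\mathcal{E}$ is translationally covariant.

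The step I expect to need the most care is the inversion formula expressing $\mathcal{P}_\omega$ through the $\mathcal{U}_x$: one must use that the set of spacings is discrete (automatic, and in fact finite, in finite dimension) so that the averaging isolates a single mode, and one should note that if $\mathcal{E}$ maps between systems carrying different generators, then $\Omega$ for the output may differ, in which case $[\mathcal{E}(\hat{\rho})]^{(\omega)}$ is simply taken to vanish for $\omega$ outside the output spectrum and the argument is unchanged. Everything else is routine linearity.
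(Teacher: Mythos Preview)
Your argument is correct and is essentially the same as the paper's: both rest on the identity $\mathcal{U}_x(\hat\rho)=\sum_{\omega}e^{-i\omega x}\hat\rho^{(\omega)}$, compute $\mathcal{E}\circ\mathcal{U}_x$ and $\mathcal{U}_x\circ\mathcal{E}$ as Fourier-type sums, and conclude by matching coefficients. The only difference is cosmetic: the paper simply asserts in one line that equality of the two sums for all $x$ forces $\mathcal{E}(\hat\rho^{(\omega)})=\mathcal{E}(\hat\rho)^{(\omega)}$, whereas you make this step explicit by writing the mode projector $\mathcal{P}_\omega$ as an average (or finite Vandermonde combination) of the $\mathcal{U}_x$---which is precisely the justification the paper's last sentence is implicitly invoking.
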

\begin{proof}
We observe that 
${\cal U}_x ({\hat{\rho}}) = e^{-i\hat{L}x} {\hat{\rho}} e^{i\hat{L}x} = \sum_{\omega \in \Omega} e^{-i \omega x} {\hat{\rho}}^{(\omega)}$ by taking eigenbases of the observable $\hat{L}$. Then we have $({\cal E} \circ {\cal U}_x) ({\hat{\rho}}) = {\cal E} \left(\sum_{\omega \in \Omega} e^{-i \omega x} {\hat{\rho}}^{(\omega)} \right) = \sum_{\omega \in \Omega} e^{-i\omega x}   {\cal E} \left( {\hat{\rho}}^{(\omega)} \right) $. On the other hand, we have $( {\cal U}_x \circ {\cal E}) ({\hat{\rho}}) = \sum_{\omega \in \Omega} e^{-i\omega x}   {\cal E} ({\hat{\rho}})^{(\omega)}$. Two expressions are equal for translations ${\cal U}_x$ for all $x$ if and only if ${\cal E}({\hat{\rho}}^{(\omega)}) = {\cal E}({\hat{\rho}})^{(\omega)}$ which completes the proof.
\end{proof}

Using modes of asymmetry, Marvian and Spekkens \cite{Marvian14, Marvian16} proposed a measure to quantify the degree of interference stored within mode $\omega$.
The measure is given by 
\begin{equation}
{\cal A}^{(\omega)}_{\rm tr} (\hat\rho) = \Vert {\hat\rho}^{(\omega)}\Vert_1,
\end{equation}
where $\Vert \cdot \Vert_1$ is the trace norm.
$\Vert {\hat\rho}^{(\omega)} \Vert_1$ is non-increasing under covariant operations for every $\omega$ \cite{Marvian14}.
Furthermore, it can be shown that any linear function of the modes forms a measure of asymmetry, i.e.
$
\label{CC}
\left\lVert \sum_{\omega \in \Omega} c(\omega) {\hat\rho}^{(\omega)} \right\rVert_1
$
is a measure of asymmetry for any complex function $c(\omega)$ \cite{Marvian16}.

Based on this result, together with the previously suggested interference-based measure, we introduce a different kind of mode decomposition given by the following:
\begin{equation}
{\cal A}_{\rm HS}^{(\omega)}({\hat{\rho}}) = \sum_{\lambda_i - \lambda_j = \omega} |(\sqrt{\hat{\rho}})_{ij}|^2 = (\Vert \sqrt{\hat\rho}^{(\omega)} \Vert_{\rm HS})^2,
\end{equation}
where $\Vert \cdot \Vert_{\rm HS}$ is the Hilbert-Schmidt norm.
Unlike the modes of asymmetry based on the trace norm  ${\cal A}^{(\omega)}_{\rm tr}(\hat\rho)$, however, some modes of ${\cal A}_{\rm HS}^{(\omega)}({\hat{\rho}})$ can increase by covariant operations (see Appendix \ref{AppC} for an example).

For  both measures ${\cal A}_{\rm tr}^{(\omega)}$ and ${\cal A}_{\rm HS}^{(\omega)}$, we observe that the total degree of asymmetry is given by the sum over $\omega$ with $\omega \neq 0$, i.e.
\begin{equation}
{\cal A}({\hat{\rho}})_{{\rm tr} (a)} = \sum_{\omega \in \Omega - \{0\} }  {\cal A}^{(\omega)}_{{\rm tr (HS)}} ({\hat{\rho}}),
\end{equation}
which is non-increasing by covariant operations.
In this case, we define each mode of coherence $A^{(\omega)}(\hat\rho)$ as $\omega$-coherence.

\section{Quantifying Quantum Macroscopicity}
\subsection{Quantum macroscopicity under covariant operations}

A macroscopic physical system involves a large number of particles or modes.
In order to quantify quantumness in a macroscopic system, it is natural to consider an observable, often called a {\it macroscopic observable} \cite{Poulin05},
representing some collective physical quantity of a composite system, such as a total Hamiltonian, momentum, angular momentum (or spin), and the center-of-mass position.
The choice of an appropriate observable depends on the character of the system and the physics in which we are interested.

We note that many {\it macroscopic observables} are generators of the (collective) group transformations in the 
macroscopic system.
For $N$-partite systems, generators of this type of group transformations may be expressed as 
$$
\hat{L} = \sum_{n=1}^N \hat{L}^{(n)},
$$
where $\hat{L}^{(n)}$ is a generator for each local party.
For instance, the total Hamiltonian $\hat{H}_{\rm tot} = \sum_{n=1}^N \hat{H}^{(n)}$ gives rise to time translation $e^{-i \hat{H}_{\rm tot} t}$, total angular momentum $\vec{J}_{\rm tot}= \sum_{n=1}^N \vec{J}^{(n)}$ gives rise to rotation $e^{-i\theta \vec{n} \vec{J}_{\rm tot}}$ along the axis $\vec{n}$, and the center of mass position $\hat{x}_{\rm cm} = \sum_{n=1}^N \hat{x}^{(n)} / N$ or total momentum $\hat{p}_{\rm tot} = \sum_{n=1}^N \hat{p}^{(n)}$ translates a conjugate parameter according to $e^{-i p_0 \hat{x}_{\rm cm}}$ or $e^{-i x_0 \hat{p}_{\rm tot}}$, respectively.
Moreover, the eigenvalues of the collective generators $\hat{L}$ may be highly degenerate since they are given by the sum of eigenvalues of each local generator $\hat{L}^{(n)}$.

In this sense, it is natural to consider the asymmetry relative to some macroscopic observable, and its relationship to quantum macroscopicity.
An attempt to relate microscopic and macroscopic coherence phenomena via the resource theoretic framework was proposed by Yadin and Vedral \cite{Yadin16}, by disallowing quantum operations that allow macroscopic coherence and microscopic coherence to be inter-converted. This is achieved by considering the modes of asymmetry via $\omega$.
In fact, we note by Proposition~\ref{ECov} that  free operations in this framework of quantum macroscopicity are equivalent to  translationally-covariant operations with respect to the given macroscopic observable.
In particular, the quantum Fisher information and the Wigner-Yanase-Dyson skew information are measures of asymmetry that have been proven to satisfy the conditions 
to quantify quantum macroscopicity suggested in Ref.~\cite{Yadin16}.

\subsection{Weighted measures of asymmetry}

Following the method of quantifying macroscopic quantum superposition within phase space presented in~\cite{Lee11},
we consider the characterization of quantum macroscopicity by performing a sum of the form (effective size) $\times$ (degree of coherence) for every mode.
In this scenario, the effective size of the coherence is supplied by the eigenvalue spacing $\omega$ of an observable $\hat{L}$
and the degree of coherence is given by the mode coherence (or asymmetry) for each $\omega$.

As such, we introduce the following weighted sum of $\omega$ coherence as a measure for quantifying quantum macroscopicity:
\begin{equation}
{\cal M}({\hat{\rho}}) = \sum_{\omega \in \Omega} f(\omega) {\cal A}^{(\omega)}({\hat{\rho}})
\end{equation}
for a given function $f(\omega)$, which characterizes the effective size of each mode $\omega$.

In order for the measure to be consistent, we require that $f(\omega) = 0 $ when $\omega = 0$ in order to ensure that ${\cal M}(\hat\rho) = 0$ when $\hat\rho$ is a translationally-covariant (i.e. free) state with respect to $\hat{L}$.
For example, suppose we make a simple choice of $f(\omega) = \omega^2/2 = |\lambda_i - \lambda_j|^2 /2 $ for the $\omega$-coherence measure ${\cal A}^{(\omega)}_{\rm HS}$. In this case, the weighted sum 
then gives rise to the Wigner-Yanse-Dyson skew information: $I_W({\hat{\rho}}, \hat{L}) = -(1/2) \Tr[\sqrt{\hat{\rho}},\hat{L}]^2$ \cite{Wigner63},
which has been pointed out as a potential candidate for measuring quantum coherence \cite{Girolami14} and quantum macroscopicity \cite{Yadin16}. Our approach then gives the skew-information-based measure of quantum macroscopicity with the interpretation of a weighted sum of mode coherences.

We generalize this concept by proposing 
the possible classes of weight functions $f(\omega)$ in order to construct consistent measures of macroscopicity via a weighted sum of $\omega$-coherences.

\begin{theorem}[Weighted measure of asymmetry for the Hilbert-Schmidt norm]
\label{QM1}
Suppose $f(\omega) = \omega^2 \int_{x\in {\cal X}} dx [{\rm sinc}(\omega x/2)]^2 g(x)$ for $g(x) \geq 0$ and ${\cal X} \subset {\rm I\!R}$. Then
\begin{equation}
{\cal M}_{\rm HS} (\hat\rho) = \sum_{\omega \in \Omega^+} f(\omega) {\cal A}^{(\omega)}_{\rm HS} ({\hat{\rho}})
% = \sum_{\omega \in \Omega^+} \omega^2  h(\omega) {\cal A}^{(\omega)}_{\rm HS} ({\hat{\rho}})
\end{equation}
is a convex measure and is a monotone under covariant operations, where
$\Omega^+$ is the set of positive $\omega \in \Omega$.
\end{theorem}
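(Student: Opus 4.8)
The plan is to collapse the mode sum defining ${\cal M}_{\rm HS}$ into a single integral over the group parameter $x$ with a nonnegative weight, and then to read off all required properties from facts already used in the proof of Theorem~\ref{Ca}.

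\emph{Integral representation.} Writing everything in the eigenbasis of $\hat{L}$ and using $(\hat{U}_x\sqrt{\hat\rho}\,\hat{U}_x^\dagger)_{ij}=e^{-i(\lambda_i-\lambda_j)x}(\sqrt{\hat\rho})_{ij}$, one gets
\[
\Vert\sqrt{\hat\rho}-\hat{U}_x\sqrt{\hat\rho}\,\hat{U}_x^\dagger\Vert_{\rm HS}^2=\sum_{\omega\in\Omega}\big|1-e^{-i\omega x}\big|^2\,{\cal A}^{(\omega)}_{\rm HS}(\hat\rho).
\]
The elementary identity $|1-e^{-i\omega x}|^2=4\sin^2(\omega x/2)=\omega^2x^2[{\rm sinc}(\omega x/2)]^2$, together with the symmetry ${\cal A}^{(\omega)}_{\rm HS}={\cal A}^{(-\omega)}_{\rm HS}$ (the $\omega=0$ term vanishing), rewrites the right side as $2x^2\sum_{\omega\in\Omega^+}\omega^2[{\rm sinc}(\omega x/2)]^2{\cal A}^{(\omega)}_{\rm HS}(\hat\rho)$, while the left side equals $2(1-\Tr[\sqrt{\hat\rho}\,\hat{U}_x\sqrt{\hat\rho}\,\hat{U}_x^\dagger])=2(1-A(\hat\rho,{\cal U}_x(\hat\rho)))$ since $\sqrt{{\cal U}_x(\hat\rho)}=\hat{U}_x\sqrt{\hat\rho}\,\hat{U}_x^\dagger$. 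Multiplying by $g(x)$, integrating over ${\cal X}$, and exchanging the finite sum with the integral (justified by Tonelli's theorem, as every term is nonnegative), I would arrive at
\[
{\cal M}_{\rm HS}(\hat\rho)=\int_{\cal X}dx\,\frac{g(x)}{x^2}\big(1-A(\hat\rho,{\cal U}_x(\hat\rho))\big),
\]
the integrand being finite even at $x=0$ because $1-A(\hat\rho,{\cal U}_x(\hat\rho))=O(x^2)$ there. Nonnegativity is then immediate ($A\le1$, $g\ge0$), and ${\cal M}_{\rm HS}(\hat\rho)=0$ whenever $\hat\rho$ is covariant, since then ${\cal U}_x(\hat\rho)=\hat\rho$ makes the integrand vanish.

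\emph{Convexity and monotonicity.} For fixed $x$, $\Tr\hat\rho$ is constant and $\hat\rho\mapsto\Tr[\sqrt{\hat\rho}\,\hat{U}_x\sqrt{\hat\rho}\,\hat{U}_x^\dagger]$ is concave by Lieb's concavity theorem \cite{Lieb73} --- the same argument as for (C3) in Theorem~\ref{Ca}, with $\hat{U}_x$ in place of the projector $\hat{P}_i$ --- so $\hat\rho\mapsto g(x)x^{-2}(1-A(\hat\rho,{\cal U}_x(\hat\rho)))$ is convex for every $x$ (here $g(x)\ge0$ is essential), and integrating preserves convexity. For monotonicity, let ${\cal E}$ be a covariant trace-preserving operation, i.e. ${\cal E}\circ{\cal U}_x={\cal U}_x\circ{\cal E}$. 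Using monotonicity of the affinity under trace-preserving maps (the inequality $A(\hat\rho,\hat\tau)\le A({\cal E}(\hat\rho),{\cal E}(\hat\tau))$ already invoked for (C2a)) followed by covariance gives $A(\hat\rho,{\cal U}_x(\hat\rho))\le A({\cal E}(\hat\rho),{\cal E}({\cal U}_x(\hat\rho)))=A({\cal E}(\hat\rho),{\cal U}_x({\cal E}(\hat\rho)))$, hence $1-A({\cal E}(\hat\rho),{\cal U}_x({\cal E}(\hat\rho)))\le 1-A(\hat\rho,{\cal U}_x(\hat\rho))$ pointwise in $x$; since $g(x)/x^2\ge0$, integrating yields ${\cal M}_{\rm HS}({\cal E}(\hat\rho))\le{\cal M}_{\rm HS}(\hat\rho)$.

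The only genuinely delicate step is the integral representation: recognizing that the weight family $f(\omega)=\omega^2\int_{\cal X}[{\rm sinc}(\omega x/2)]^2g(x)\,dx$ is engineered precisely so that $\sum_\omega f(\omega){\cal A}^{(\omega)}_{\rm HS}$ becomes a positive superposition of Hilbert-Schmidt distances between $\hat\rho$ and its orbit points ${\cal U}_x(\hat\rho)$, together with bookkeeping the factor of two between the $\Omega$ and $\Omega^+$ conventions and the interchange of sum and integral. Everything after that reduces to Lieb concavity and affinity monotonicity, both already established for ${\cal C}_a$; the hypotheses $g\ge0$ and $f(0)=0$ (automatic here, since $\omega^2[{\rm sinc}(\omega x/2)]^2$ vanishes at $\omega=0$) are exactly what the convexity and monotonicity arguments need.
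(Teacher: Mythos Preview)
Your proof is correct and follows essentially the same route as the paper's: both obtain the integral representation ${\cal M}_{\rm HS}(\hat\rho)=\int_{\cal X}dx\,[g(x)/x^2]\,D_H(\hat\rho,{\cal U}_x(\hat\rho))$ by expanding $1-A(\hat\rho,{\cal U}_x(\hat\rho))$ in the eigenbasis of $\hat L$, and then deduce convexity and covariant monotonicity from the corresponding properties of the Hellinger distance. The only difference is that the paper cites \cite{MarvianNC} for those properties of $D_H$, whereas you supply them directly via Lieb's concavity theorem and the monotonicity of the affinity already used in Theorem~\ref{Ca}; this makes your argument slightly more self-contained but is not a different method.
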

We present the proof in Appendix \ref{AppD}.
The above construction generalizes the Wigner-Yanase-Dyson skew-information-based measure $I_W({\hat{\rho}}, \hat{L})$, which can be retrieved by choosing $g(x) = \delta(x) $, where $\delta(x)$ is the Dirac-delta function. In this case, $f(\omega) = \omega^2$.

A trace-norm-based quantum macroscopicity using the trace norm can also be obtained constructed as follows, again by considering the sum over modes of the form (effective size) $\times$ (degree of coherence):
\begin{theorem} [Weighted measure of asymmetry  for the trace norm]
\label{M1}
For $f(\omega) \geq 0$ for all $\omega$ and $f(0) = 0$,
\begin{equation}
{\cal M}_{\rm tr}({\hat{\rho}}) = \sum_{\omega \in \Omega^+} f(\omega) {\cal A}_{\rm tr}^{(\omega)}({\hat{\rho}})
\end{equation}
is a convex measure and monotone under covariant operations.
\end{theorem}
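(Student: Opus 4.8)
The plan is to reduce the statement to two ingredients that are already in place: the linearity and covariance of the mode projection $\hat\rho\mapsto\hat\rho^{(\omega)}$ supplied by Proposition~\ref{ECov}, and the contractivity of the trace norm under quantum operations. Nonnegativity of ${\cal M}_{\rm tr}$ is immediate from $f(\omega)\ge0$ and ${\cal A}_{\rm tr}^{(\omega)}(\hat\rho)=\Vert\hat\rho^{(\omega)}\Vert_1\ge0$. A state is free (translationally covariant relative to $\hat L$) precisely when $\hat\rho=\hat\rho^{(0)}$, in which case $\hat\rho^{(\omega)}=0$ and ${\cal A}_{\rm tr}^{(\omega)}(\hat\rho)=0$ for every $\omega\in\Omega^+$, so ${\cal M}_{\rm tr}(\hat\rho)=0$; since $f(0)$ never enters the sum, the hypothesis $f(0)=0$ is only there to keep the ``all modes'' form of the sum consistent. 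Convexity then follows because $\hat\rho\mapsto\hat\rho^{(\omega)}$ is linear and $X\mapsto\Vert X\Vert_1$ is a norm, hence convex, so each ${\cal A}_{\rm tr}^{(\omega)}$ is convex in $\hat\rho$ and a nonnegative linear combination of convex functions is convex.

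The core step is monotonicity under a covariant operation ${\cal E}$. By Proposition~\ref{ECov} we have ${\cal E}(\hat\rho)^{(\omega)}={\cal E}(\hat\rho^{(\omega)})$, hence ${\cal A}_{\rm tr}^{(\omega)}({\cal E}(\hat\rho))=\Vert{\cal E}(\hat\rho^{(\omega)})\Vert_1$, and it suffices to show $\Vert{\cal E}(\hat\rho^{(\omega)})\Vert_1\le\Vert\hat\rho^{(\omega)}\Vert_1$ for each $\omega$; summing these inequalities with the nonnegative weights $f(\omega)$ gives ${\cal M}_{\rm tr}({\cal E}(\hat\rho))\le{\cal M}_{\rm tr}(\hat\rho)$. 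The per-mode inequality is exactly the statement used in Ref.~\cite{Marvian14} that $\Vert\hat\rho^{(\omega)}\Vert_1$ is non-increasing under covariant operations. For completeness I would recover it directly: writing ${\cal E}(X)=\Tr_E(VXV^\dagger)$ with $V$ a Stinespring isometry, isometric conjugation preserves the trace norm while the partial trace is trace-norm contractive, so $\Vert{\cal E}(X)\Vert_1\le\Vert VXV^\dagger\Vert_1=\Vert X\Vert_1$ for an arbitrary operator $X$, in particular for the non-Hermitian $\hat\rho^{(\omega)}$. The averaged monotonicity over the branches of a covariant instrument $\{{\cal E}_n\}$ follows the same way from $\sum_n\Vert{\cal E}_n(X)\Vert_1\le\Vert X\Vert_1$, a pinching-type trace-norm bound obtained by dilating the instrument and using block-diagonal operator-norm estimates.

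I do not expect a genuine obstacle here: the theorem is essentially a corollary of Proposition~\ref{ECov} together with per-mode trace-norm monotonicity. The only mildly technical point is that the mode components $\hat\rho^{(\omega)}$ with $\omega\neq0$ are not self-adjoint, so one needs trace-norm contraction of ${\cal E}$ on general operators rather than the more familiar Hermitian case, which is what the Stinespring argument above handles. It is worth contrasting this with Theorem~\ref{QM1}: there the Hilbert--Schmidt modes ${\cal A}_{\rm HS}^{(\omega)}$ are \emph{not} individually monotone (see Appendix~\ref{AppC}), which is precisely why that result must constrain $f$ to the sinc-kernel family to restore monotonicity of the weighted sum, whereas here monotonicity of every mode makes any nonnegative $f$ admissible.
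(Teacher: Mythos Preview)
Your proposal is correct and follows essentially the same route as the paper, which simply notes that each ${\cal A}_{\rm tr}^{(\omega)}$ is convex and monotone under covariant operations (citing Ref.~\cite{Marvian14}) and that a nonnegative combination inherits both properties. Your added Stinespring argument for trace-norm contraction on the non-Hermitian mode components $\hat\rho^{(\omega)}$ and the contrast with Theorem~\ref{QM1} are helpful elaborations, but they do not depart from the paper's approach.
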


The proof is straightforward from the fact that $ {\cal A}_{\rm tr}^{(\omega)}({\hat{\rho}})$  is convex and monotone under covariant operations for every $\omega$, and $f(\omega)$ is a nonnegative function.
Similarly, we may take $f(\omega) = \omega^2$ to construct a measure of quantum macroscopicity based on the trace norm, ${\cal M}_{\rm tr}({\hat{\rho}}) = \sum_{\omega \in \Omega^+} \omega^2  ||\hat\rho^{(\omega)}||_1$.

\subsection{Conditions for macroscopic coherence measures and scaled measure of coherence}
In the previous section, we discussed weighted measures of $\omega$-coherence for some weight function $f(\omega)$.
To quantify the ``macroscopic'' coherence of quantum states, we are additionally required to impose an ordering between different eigenvalue spacings $\omega$. To this end, we may take the effective size $f(\omega)$ to be {\it monotonically increasing when $\omega$ increases}.

An important requirement for consistent quantum macroscopicity measures is that   products of many microscopic superpositions should be distinguished from genuine superpositions of macroscopically distinct states \cite{Leggett80}.
Examples of such accumulation of microscopic coherences are Bose-Einstein condensates and superconductivity.
In this sense, the conditions for quantum macroscopicity suggested in Ref.~\cite{Yadin16} may not be sufficient to define consistent measures because there exist measures satisfying them that can give rise to higher degrees of quantum macroscopicity for product states ${\hat{\rho}}^{\otimes N}$ than the Greenberger–-Horne-–Zeilinger(GHZ) type entangled states when the former is a simple accumulation of coherence between microscopic states, while the latter superposes macroscopically distinct states~\cite{Kwon16}.
This implies that any given weight function $f(\omega)$ needs to be checked against this condition in order to yield a consistent macroscopic coherence measure.

Here, we introduce a particular class of weight functions, parametrized by the scaling parameter $\sigma$, that will enable us to distinguish GHZ states from product states. We call this  a  {\it scaled} measure of quantum coherence based on the Hilbert-Schmidt norm,  which cuts off the microscopic contribution to coherence by introducing a scale $\sigma$.

\begin{definition}[Scaled measure of quantum coherence] For a given scale parameter $\sigma >0$, the scaled measure of quantum coherence is defined as
\begin{equation}
{\cal M}_\sigma({\hat{\rho}}) = \sum_{\omega \in \Omega}
\left[
 1 - e^{-\frac{\omega^2}{8\sigma^2}}
\right] {\cal A}^{(\omega)}_{\rm HS} ({\hat{\rho}}).
\end{equation}
\end{definition}

It can be shown that the scaled measure ${\cal M}_\sigma$ is non-increasing under translationally-covariant operations by applying Theorem~\ref{QM1} with $g(x) = x^2 (\sqrt\pi \tau)^{-1} e^{-x^2/\tau^2}$ and taking $\tau = (\sqrt{2} \sigma)^{-1}$ .
The parameter $\sigma$ determines an effective cutoff of the weight.
To see this, note that for $\omega \lesssim \sigma$, the weight $1 - \exp[{-\omega^2/(8\sigma^2)}]$ is relatively small compared to the case of $\omega \gtrsim \sigma$.
This cutoff may be used to exclude the contribution by microscopic coherence.
In the limit where there is no cutoff imposed, i.e.,  $\sigma \rightarrow 0$, the scaled measure of coherence becomes ${\cal M}_\sigma(\hat{\rho}) \rightarrow {\cal A}_{a} (\hat\rho)$, which is the standard ``unweighted" measure of asymmetry.

This measure can also be interpreted as the deviation of a quantum state for a fuzzy reference frame \cite{Jeong14, Frowis16}.
Note that $M_\sigma(\hat\rho) = \int D_H({\hat{\rho}},{\cal U}_x ({\hat{\rho}})) (\sqrt{\pi} \tau)^{-1} e^{-x^2/\tau^2} dx$, so the scaled measure of coherence has the interpretation of the average Hellinger distance $D_H(\hat\rho, \hat\tau) = (1/2) {\rm Tr} [\sqrt{\hat\rho} - \sqrt{\hat\tau}]^2$ generated by a group transformation ${\cal U}_x$ over the broadening by the Gaussian distribution, when the alignment of the reference frame is imperfect.

The scaled measure of coherence is also related to the measurement process with a finite precision \cite{Kofler07, Frowis16} onto the eigenbasis of the macroscopic observable. For the given observable $\hat{L} = \sum_i \lambda_i \ket{i}\bra{i}$, the Gaussian smoothing of the projections $\hat{P}_i = \ket{i}\bra{i}$ is given by $\hat{P}_i \rightarrow \hat{Q}_x^\sigma = \sum_i \sqrt{q_i^\sigma(x)} \hat{P}_i$, where $q^\sigma_i(x) = (\sqrt{2 \pi} \sigma)^{-1} e^{-(x-\lambda_i)^2/(2\sigma^2)}$ with the domain $x \in (-\infty,\infty)$.
In this case, the effect of the imperfect measurement process, $\Phi_\sigma(\hat\rho) = \int dx \hat{Q}^\sigma_x \hat\rho \hat{Q}^{\sigma\dagger}_x$, can be captured via the measurement-induced disturbance suggested in Refs.~\cite{Frowis16, Kwon16}, which gives the lower bound of the scaled measure of coherence,
\begin{equation}
\label{NEQ}
\frac{1}{2} D_B({\hat{\rho}},\Phi_\sigma({\hat{\rho}})) \leq {\cal M}_\sigma({\hat{\rho}}) \leq 1-e^{-\frac{I_W({\hat{\rho}},\hat{L})}{4\sigma^2}},
\end{equation}
where $D_B({\hat{\rho}},\hat\tau) = 2 - 2 \sqrt{F ({\hat{\rho}},\hat\tau)}$ is the Bures distance.
The proof can be found in Appendix \ref{AppE}.

It is important to note that the skew information $I_W$ is additive for a product state $\otimes_{n=1}^N \hat\rho_{n}$ with respect to a collective operator $\sum_{n=1}^N \hat{L}^{(n)}$, i.e., $I_W(\otimes_{n=1}^N \hat\rho_{n}, \sum_{n=1}^N \hat{L}^{(n)}) = \sum_{n=1}^N I_W(\hat\rho_{n}, \hat{L}^{(n)}) \leq N L^2_{\rm max} /4$, where $L_{\rm max}$ is the maximum 
among the eigenvalue differences of $\hat{L}^{(n)}$.
%among the operator norms $|| \hat{L}^{(n)}||_\infty$ and $I_W(\hat\rho_{n} , \hat{L}^{(n)}) \leq ||\hat{L}^{(n)}||_{\rm max}^2$.
Then the upper bound of (\ref{NEQ}) becomes 
\begin{equation}
\label{ProdBound}
{\cal M}_\sigma(\otimes_{n=1}^N \hat\rho_{n}) \leq 1 - \exp[- N L_{\rm max}^2 / (4 \sigma)^2) ].
\end{equation}
If we take the cutoff to be $\sigma = \sqrt{N \ln N}$, we have $\lim_{N \rightarrow \infty} {\cal M}_\sigma(\otimes_{n=1}^N \hat\rho_{n}) \rightarrow 0$, regardless of the local state $\hat\rho_{n}$ when $L_{\rm max}$ is bounded by a finite value.
Consequently, by the convexity of ${\cal M}_\sigma$,
{\it microscopic coherences contained in any separable multi-partite state are ruled out for the cutoff $\sigma = \sqrt{N \ln N}$ in the large particle limit of $N \gg 1$}.
%{\color{blue}
%This result demonstrates that multipartite entanglement is a necessary condition for quantum macroscopicity. It might be relevant to multipartite quantum correlation such as genuine multipartite nonlocality \cite{Lim10, Lee13} and GHZ theorem \cite{Ryu14}.}
The bound (\ref{NEQ}) might be also be useful for the direct detection of quantum macroscopicity in laboratories with finite precision measurements \cite{Frowis16}.

We also show that a general form of a scaling function can be chosen such that 
${\cal M}_\sigma(\hat\rho) = \sum_{\omega \in \Omega} f(\omega^2 / \sigma^2) {\cal A}_{\rm HS}^{(\omega)}(\hat\rho)$ is an asymmetry monotone 
for a concave function $f(x) \geq 0$ that is monotonically increasing with $x \geq 0$ and $f(0)=0$.
In this case, by taking a collective observable $\sum_{n=1}^N \hat{L}^{(n)}$ and the cutoff $\sigma = \sqrt{N \ln N}$, we can rule out microscopic coherences from every separable state $\hat\rho_{\rm sep}  = \sum_i p_i \otimes_{n=1}^N \hat\rho_{n}^i $ in $N$-partite systems because
\begin{equation}
\begin{aligned}
{\cal M}_\sigma(\hat\rho_{\rm sep}) &\leq \sum_i p_i {\cal M}_\sigma (\otimes_{n=1}^N \hat\rho_{n}^i) \\
&= \sum_i p_i f \left( \frac{\sum_{\omega \in \Omega}  \omega^2 {\cal A}_{\rm HS}^{(\omega)}(\otimes_{n=1}^N \hat\rho_{n}^i )}{N\ln N} \right)\\
& =\sum_i p_i  f \left( \frac{ 2 I_W(\otimes_{n=1}^N \hat\rho_{n}^i , \sum_{n=1}^N \hat{L}^{(n)}) }{N\ln N} \right) \\
&\leq f\left(\frac{L_{\rm max}^2}{2\ln N} \right)
\end{aligned}
\end{equation}
 becomes zero when $N \rightarrow \infty$ for a bounded $L_{\max}$.\\

\section{Application to $N$-partite spin-$1/2$ systems}
\subsection{Spin coherent states and GHZ-states}
In this section, we investigate the quantum macroscopicity of an $N$-partite spin-$1/2$ system with respect to 
the total spin observable along the $z$ axis,
$\hat{S}_z = \sum_{n=1}^N \hat{s}^{(n)}_z $, where $\hat{s}_z^{(n)} = {\mathbb 1}_1 \otimes {\mathbb 1}_2 \otimes \cdots \otimes {\mathbb 1}_{n-1} \otimes (\hat{\sigma}_z /2) \otimes {\mathbb 1}_{n+1} \otimes \cdots \otimes {\mathbb 1}_N$ is the local spin observable with the Pauli operator $\hat\sigma_z$.
Consequently, $\hat{S}_z$ has an eigenvalue spectrum $\{-N/2, -N/2+1, \cdots, N/2-1, N/2\}$ and the maximum difference between eigenvalues is $\omega_{\rm max} = N$.

In order to test the consistency of our measure,
we first compare a class of product states of the form: 
$$
\ket{\theta,\phi} = (\cos(\theta/2) \ket{0} + \sin(\theta/2) e^{i\phi} \ket{1})^{\otimes N},
$$
which are the so called spin-coherent states. We compare this
 with the generalized GHZ state:
$$
 \ket{\psi_{\rm GHZ}} = \cos(\theta/2) \ket{0}^{\otimes N} + \sin(\theta/2) e^{i\phi} \ket{1}^{\otimes N}
$$
 for $\theta \in [0, \pi]$ and $\phi \in [0, 2\pi]$.
Each mode of asymmetry for the spin coherent state may then be verified to be
\begin{widetext}
\small
$$
\begin{aligned}
{\cal A}^{(\omega)}_{\rm tr}(\ket{\theta, \phi}) &= \sum_{k=\omega}^N \sqrt{\binom{N}{k} \binom{N}{k-w}} \cos^{2N-2k+\omega}(\theta/2) \sin^{2k-\omega}(\theta/2) \approx \frac{1}{2} \exp \left[{-\frac{\omega^2}{2 N\sin^2 \theta}} \right] {\rm erfc}\left(\frac{\omega - 2N\sin^2(\theta/2)}{ \sqrt{2 N \sin^2 \theta} }\right),\\
{\cal A}^{(\omega)}_{\rm HS}(\ket{\theta, \phi}) &= \sum_{k=\omega}^N \binom{N}{k} \binom{N}{k-w} \cos^{2(2N-2k+\omega)}(\theta/2) \sin^{2(2k-\omega)}(\theta/2)\approx \frac{1}{2 \sqrt{\pi  N \sin^2{\theta}}} \exp\left[ - \frac{\omega^2}{N\sin^2\theta} \right] {\rm erfc} \left( \frac{\omega- 2N\sin^2(\theta/2)}{\sqrt{N  \sin^2 \theta}} \right),
\end{aligned}$$ 
\end{widetext}
\normalsize
respectively,
where ${\rm erfc}(x) = (2/\sqrt{\pi}) \int_x^\infty e^{-t^2} dt$ is the complementary error function and the approximations are given when $N \gg 1$ using the normal approximation of binomial distributions.

\begin{figure*}[t]
\includegraphics[width=0.90 \textwidth]{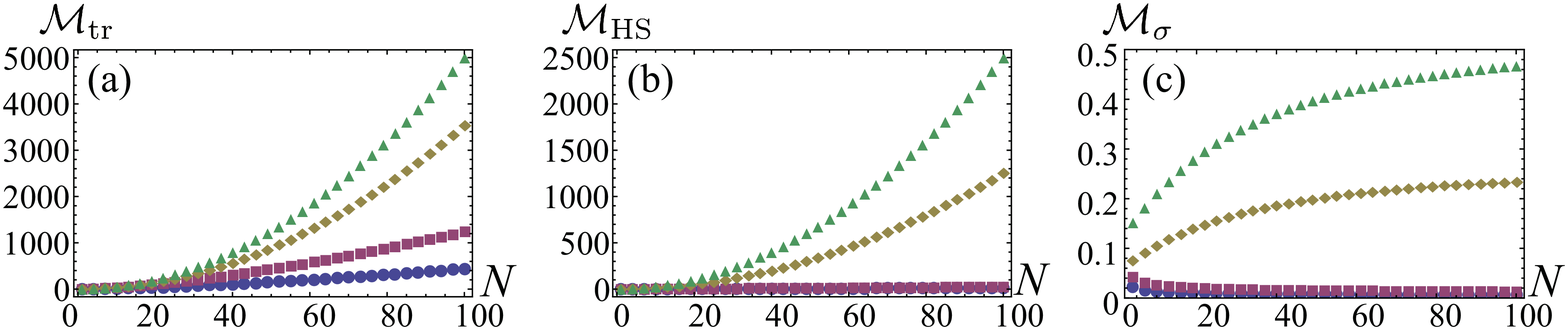} 
\caption{The degree of quantum macroscopcity for spin-coherent states $\ket{\theta,\phi}$ and GHZ-states $\ket{\psi_{\rm GHZ}}$ based on the weighted measures of (a) the trace norm (${\cal M}_{\rm tr}$), (b) the Hilbert-Schmidt norm (${\cal M}_{\rm HS}$),
 and (c) the scaled measure of coherence (${\cal M}_\sigma$) for $\sigma = \sqrt{N \ln N}$. 
Square symbols and circular symbols refer to the coherent states with $(\theta, \phi) = (\pi/2, 0)$, and $(\theta, \phi) = (\pi/4, 0)$, respectively.
Triangular symbols and diamond symbols refer to the GHZ-states with $\theta = \pi/2$, and $\theta=\pi/4$, respectively.}
\label{FIG1}
\end{figure*}

When we take the weight function to be $f(\omega) = \omega^2$ for the both measures,
we have 
 ${\cal M}_{\rm tr} (\ket{\theta, \phi}) \approx  \sqrt{ \pi / 2 } N^{3/2} \sin^3\theta \propto N^{3/2}$
and ${\cal M}_{\rm HS} (\ket{\theta, \phi}) = (1/4) N  \sin^2\theta \propto N$ for a large number of $N$.
On the other hand, for the GHZ state $\ket{\psi_{\rm GHZ}}$,
each mode of asymmetry is given by  
${\cal A}^{(\omega)}_{\rm tr}(\ket{\psi_{\rm GHZ}}) = (1/2) \sin \theta ( \delta_{N,\omega} + \delta_{N,-\omega} )$
and
${\cal A}^{(\omega)}_{\rm HS}(\ket{\psi_{\rm GHZ}}) = (1/4) \sin^2 \theta ( \delta_{N,\omega} + \delta_{N,-\omega} )$,
respectively.
In the case of $f(\omega)=\omega^2$, both weighted measures are given by 
${\cal M}_{\rm tr}(\ket{\psi_{\rm GHZ}}) = (1/2) N^2 \sin \theta$
and
${\cal M}_{\rm HS}(\ket{\psi_{\rm GHZ}}) = (1/4) N^2 \sin^2 \theta$,
which scale with $N^2$.
Figure \ref{FIG1} (a) and (b) show that 
the product state and the GHZ state scales differently with respect to the number of particles $N$, 
so microscopic coherence in the product state can be distinguished from macroscopic coherence in the GHZ-state in this manner.
Thus, for both measures, the choice of weight function $f(\omega) = \omega^2$ passes the basic consistency check, and they may be considered appropriate candidates for quantifying quantum macroscopcity, as GHZ states always have a larger macroscopicity than product states in the macroscopic limit.

We can also perform the check using the scaled measure of quantum coherence ${\cal M}_\sigma$ discussed previously.
The scaled measure of coherence for a spin-coherent state $\ket{\theta,\phi}$ is given by
${\cal M}_\sigma (\ket{\theta,\phi}) \approx 1 - \left[1+(N\sin^2{\theta}/(8 \sigma^2)) \right]^{-1/2}$ for $N \gg 1$.
Note that every spin-coherent state $\ket{\theta,\phi}$ is separable, thus the macroscopcity tends to ${\cal M}_\sigma (\ket{\theta,\phi}) \rightarrow 0 $ for a large value of $N \gg 1$ by Eq.~(\ref{ProdBound})
when $\sigma = \sqrt{N \ln N}$.
Figure~\ref{FIG1} (c) demonstrates how the product of microscopic coherence in a spin-coherent state $\ket{\theta,\phi}$ behaves differently from that of the GHZ-state, $\ket{\psi_{\rm GHZ}}$,
by taking the cutoff $\sigma = \sqrt{N \ln N}$.
On the other hand, the scaled measure of coherence for the GHZ-state is given by ${\cal M}_\sigma (\ket{\psi_{\rm GHZ}}) = (1/2) \sin^2\theta (1 - \exp[-N^2/(8\sigma^2)])$.
Thus, if we take $\sigma = \sqrt{N \ln N}$, the scaled measure for the GHZ state ${\cal M}_\sigma (\ket{\psi_{\rm GHZ}})$ gives a larger value for large $N$ (see Fig.~\ref{FIG1}), and so it also passes the consistency check.
This can be interpreted as evidence of genuine macroscopic coherence in the $N$-partite spin system.

\begin{figure*}[t]
\includegraphics[width=0.9 \textwidth]{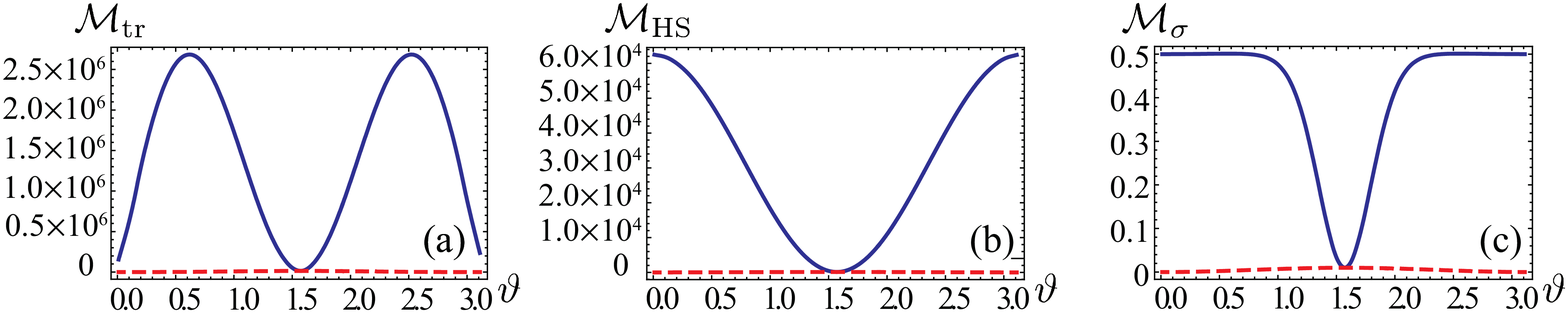} 
\caption{The degree of quantum macroscopcity for the product state $\ket{0}^{\otimes N}$ (dashed curves) and the GHZ-state $\ket{0}^{\otimes N} + \ket{1}^{\otimes N}$ (solid curves)
with respect to the total spin measurement $\hat{S}_{\vec{n}}$ for different $\vec{n} = (\sin{\vartheta}\sin{\varphi}, \sin\vartheta \cos\varphi, \cos{\vartheta})$. The weighted measures of (a) the trace norm (${\cal M}_{\rm tr}$) and (b) Hilbert-Schmidt norm (${\cal M}_{\rm HS}$)
 and (c) the scaled measure of coherence (${\cal M}_\sigma$) for $\sigma = \sqrt{N \ln N}$ are evaluated for $N=500$.
 All the measures do not depend on $\varphi$. }
\label{ThetaFig}
\end{figure*}
Quantum macroscopicity measures could be investigated 
for a general product state
$\ket{\psi_{\rm prod}} = \otimes_{i=1}^N \ket{\psi_i}$, where
$\ket{\psi_i} = \cos(\theta_i/2) \ket{0} + \sin(\theta_i/2) e^{i \phi_i} \ket{1}$
with general total spin measurement $\hat{S}_{\vec{n}} = \vec{n} \cdot \vec{{S}}$ with $\vec{n} = (\sin \vartheta \cos \varphi, \sin \vartheta\sin\varphi, \cos \vartheta)$ and $\vec{ S} := (\hat{S}_x, \hat{S}_y, \hat{S}_z)$.
In this case, the Hilbert-Schmidt norm based measure with the choice of weight function $f(\omega) = \omega^2$ is 
$
{\cal M}_{\rm HS}(\ket{\psi_{\rm prod}}) = \sum_{i=1}^N {\rm Var} (\ket{\psi_i}, \hat{s}_{\vec{n}}^{(i)}) = (1/4) \sum_{i=1}^N \sin^2{\Theta_i}, 
$
where $\Theta_i$ is an angle between two vectors $\vec{m}_i = (\sin\theta_i \cos\phi_i, \sin\theta_i \sin\phi_i, \cos\theta_i )$ and $\vec n$.
The scaled measure of coherence for
every product state, ${\cal M}_\sigma (\ket{\psi_{\rm prod}})$, vanishes by choosing the scale parameter $\sigma = \sqrt{N \ln N}$  when $N \gg 1$ by Eq.~(\ref{ProdBound}).

On the other hand, quantum macroscopicity of the GHZ state $\ket{\psi_{\rm GHZ}}$ may scale differently depending on the choice of the measurement basis.
The trace-norm-based measure for the GHZ state tends to oscillate by changing measurement axis $\vec{n}$, which gives the highest value for $\vartheta \approx \pi/4$.
The Hilbert-Schmidt norm-based measure of the GHZ state is given by
$
{\cal M}_{\rm HS}(\ket{\psi_{\rm GHZ}}) = (1/4) N^2 \sin^2{\theta} \cos^2{\vartheta} + (N/4) \sin^2{\vartheta}$
 for $N>2$.
Thus, quantum macroscopicity measures for the GHZ-state give significantly larger values than those for product states unless $\vartheta$ is near $\pi/2$.
When choosing the measurement axis $\hat{S}_x$ ($\vartheta = \pi/2,~\varphi=0$), however, quantum macroscopicity of the GHZ-state cannot be discriminated from product states since 
both states contain small degrees of coherence between distinct eigenstates of $\hat{S}_x$ with $\omega \propto N$, which vanishes when $N \gg 1$.

The scaled measure of coherence shows a similar behavior with the other two measures, but it seems more robust against the choice of the measurement basis $\vec{n}$.
In particular, in the limit of the large system size $N \gg 1$, the scaled measure of coherence for the GHZ-state is given by ${\cal M}_\sigma(\ket{\psi_{\rm GHZ}}) \approx 0.5$ except for the narrow region near $\vartheta = \pi/2$ while ${\cal M}_\sigma(\ket{\psi_{\rm prod}}) = 0$ for product states.
The difference of quantum macroscopicity between the product state $\ket{0}^{\otimes N}$ and the GHZ-state $\ket{0}^{\otimes N}+ \ket{1}^{\otimes N}$  is shown in Fig.~\ref{ThetaFig} with respect to the total spin measurement axis $\vec{n}$.

\subsection{Decoherence effect}

\begin{figure}[b]
\includegraphics[width= \linewidth]{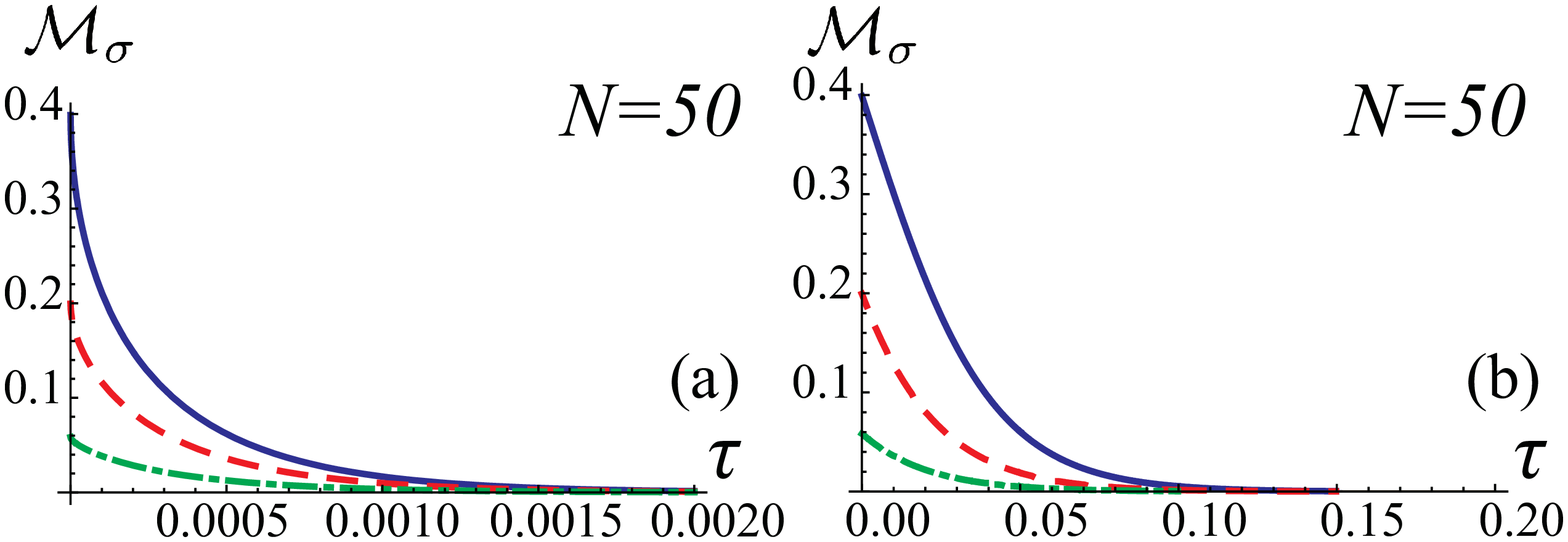} 
\caption{Decay of the scaled measure of quantum macroscopcity when the initial state is given by the pure GHZ-states $\ket{\psi_{\rm GHZ}} = \cos(\theta/2) \ket{0}^{\otimes N} + \sin(\theta/2)\ket{1}^{\otimes N}$.
The number of the particle is given by $N=50$.
Both 
(a) the dephasing channel $\hat{A}=\hat{S}_z$ and 
(b) the dissipation channel $\hat{A}=\hat{S}_-$ 
lead to the rapid decay of macroscopic coherence, even starting with the superposition $\ket{0}^{\otimes N} + \ket{1}^{\otimes N}$ $(\theta = \pi/2)$.
For both figures, solid curves refer to $\theta = \pi /2$, dashed curves  to $\theta = \pi / 4$, and dot-dashed curves to $\theta=\pi/8$.}
\label{FIG3}
\end{figure}

In this section, we study how decoherence affects the quantum macroscopicity for the $N$-particle spin system.
We analyze the degree of macroscopic coherence present in the system when
the system experiences a decoherence channel given by the following master equation in the Lindblad form,
\begin{equation}
\label{Lindblad}
 {\cal L}(\hat\rho) = \frac{d \hat\rho}{d \tau} =  \hat{A} \hat\rho \hat{A}^\dagger -\frac{1}{2} ( \hat{A}^\dagger \hat{A} \hat\rho + \hat\rho \hat{A}^\dagger \hat{A}).
\end{equation}

We first analyze the case of the dephasing channel, which corresponds to the case where $\hat{A} = \hat{S}_z$.
In this case, the GHZ-state $\ket{\psi_{\rm GHZ}}$ evolves according to
\small
$$
\begin{aligned}
\hat\rho_{\rm GHZ} (\tau) &=  \cos^2(\theta/2) (\ket{0} \bra{0})^{\otimes N} + \sin^2(\theta/2) (\ket{1} \bra{1})^{\otimes N} \\
&\quad+ \frac{\sin \theta}{2} e^{-\frac{N^2 \tau}{2}} ( e^{-i\phi} (\ket{0} \bra{1})^{\otimes N} + e^{i\phi} (\ket{1} \bra{0})^{\otimes N}),
\end{aligned}
$$
\normalsize
after time $\tau$.
Note that the off-diagonal terms experiences exponential decay $\exp[-N^2 \tau /2]$ so the quantum macroscopicity of the GHZ-state rapidly degrades under the dephasing channel.

We also study the dissipation channel described in the Lindblad form Eq.~(\ref{Lindblad}) by taking $\hat{A}= \hat{S}_-$,
where $\hat{S}_\pm = \sum_{n=1}^N \hat{s}_\pm^{(n)}$ are collective ladder operators given by the sum of ladder operators for each local party $\hat{s}_\pm^{(n)} = \hat{s}_x^{(n)} \pm i \hat{s}_y^{(n)}$.
Figure \ref{FIG3} demonstrates that quantum macroscopicity is fragile under both dephasing and dissipation channels.
These results imply that extremely noiseless environments are required 
in order to generate and manipulate quantum states while preserving macroscopic quantum coherence.
We also note that the degree of macroscopic coherence decays faster under the dephasing channel than the dissipation channel for a given characteristic time $\tau$ for the given parameters.

\section{Conclusion}
We introduced a measure of coherence that simultaneously quantifies asymmetry with respect to an observable
$\hat{L} = \sum_i \lambda_i \ket{i} \bra{i}$.
Our coherence and asymmetry measure can then be decomposed into modes given through the eigenvalue spacings $\omega = \lambda_i - \lambda_j$ specified by the observable $\hat{L}$. Using this, we may construct the so-called $\omega$-coherence and its corresponding measure. This allows us to discuss coherence,  asymmetry and macroscopic coherence on the same level.

We pointed out that quantum macroscopicity could be considered via the asymmetry with respect to some macroscopic observable that generates a collective group transformation on the total system.
As the system size gets larger, multiple modes $\omega$ contribute to the coherence and an ``effective size" of modes may be considered.
From this viewpoint, we defined a class of quantum macroscopicity measures from coherence and asymmetry measures characterized by a sum of the form (effective size) $\times$ (degree of coherence) for all the modes $\omega$. Through this, we demonstrated that many macroscopic measures of coherence may be related to the total coherence of a system via a simple weighted sum, of which the skew information based measure is a special case.
It will be interesting for future work to investigate whether previously studied quantum macroscopicity measures such as quantum Fisher information $I_F(\hat\rho,\hat{L}) = 2 \partial_x^2 D_B(\hat{\rho}, {\cal U}_x (\hat\rho))$ and generalized skew informations $I_\alpha(\hat{\rho}, \hat{L}) = (-1/2) {\rm Tr} [\hat\rho^{\alpha}, \hat{L}] [\hat\rho^{1-\alpha}, \hat{L}]$ can also be formulated in this framework.

We also discussed how it is desirable to exclude microscopic superpositions in order to implement a proper measure of macroscopic coherence, which is not guaranteed simply by the conditions proposed in Ref.~\cite{Yadin16} as shown in Ref.~\cite{Kwon16}. This necessarily imposes additional constraints on the weight function, which must be verified in order to yield a consistent measure.
We then introduced a scaled measure of coherence, where the coherence for each mode is differently weighted by a given scaling parameter $\sigma$.
This scaling parameter may be interpreted as a fuzziness in the reference frame, which rules out microscopic superpositions that are not detectable for a given degree of fuzziness.
In this way, the measure assures that only the coherence between macroscopically distinct states is considered.
We then compared the degree of quantum macroscopicity of a product state and a GHZ-state in $N$-particle spin systems.
We showed that the microscopic portion of the coherence present in product states is effectively suppressed by introducing the cutoff $\sigma = \sqrt{N\ln N}$. We also considered decoherence effects, and we demonstrated numerically that the degree of quantum macroscopicity present in the GHZ state is extremely susceptible to decoherence.

Our study develops the  conceptual notion of quantum macroscopicity
by accounting for both the ``degree of coherence (quantum)" and the ``size of the system (macroscopicity)" while simultaneously falling under the framework of the resource theory of asymmetry.
We stress that the arguments presented are not limited to any particular systems, but may also be applied to any general macroscopic observable $\hat{L}$ for any macroscopic, composite systems. We expect that our study would add insight into the general properties of genuine macroscopic quantum effects.

{\it Note added: }
Recently, we became aware of Ref.~\cite{Yu17}, in which the same type of coherence measure was suggested but without its extension to quantum macroscopicity measure.
We realized that Eq.~(\ref{CMdef}) in Theorem.~\ref{Ca} of our present manuscript is identical to Eq.~(2) of Ref.~\cite{Yu17}.

\section*{acknowledgment}
This work was supported by a National Research Foundation of Korea grant funded by the Korea government (Grant No. 2010-0018295) and by the Korea Institute of Science and Technology Institutional Program (Project No. 2E26680-16-P025).
K.C. Tan was supported by Korea Research Fellowship Program through the National Research Foundation of Korea (NRF) funded by the Ministry of Science and ICT (Grant No. 2016H1D3A1938100).

\appendix
\section{Complete proof of Theorem.~\ref{Ca}}
\label{AppA}
 We show that
$A({\hat{\rho}}, \hat\sigma) \leq \sum_n A(\hat{K}_n {\hat{\rho}} \hat{K}^\dagger_n, \hat{K}_n \hat\sigma \hat{K}^\dagger_n)$ for Kraus operator set $\sum_n \hat{K}^\dagger_n \hat{K}_n = {\mathbb 1}$.
\begin{proof}
A set of Kraus operators $\{ \hat{K} \}$ can be expressed using ancillary state $\hat\tau_2$: $\hat{K}_n {\hat{\rho}} \hat{K}^\dagger_n = \Tr_2 ({\mathbb 1} \otimes \hat{\Pi}_n) \hat{U} ({\hat{\rho}} \otimes \hat\tau_2) \hat{U}^\dagger  ({\mathbb 1} \otimes \hat{\Pi}_n)$.
Note that $A({\hat{\rho}}, \hat\sigma)$ is non-increasing under partial trace $A({\hat{\rho}}_{12}, \hat\sigma_{12}) \leq A(\Tr_2 {\hat{\rho}}_{12}, \Tr_2 \hat\sigma_{12})$ and satisfies the following properties for a set of projection operators $\{ \hat{\Pi}_n \}$: 
$\sum_n A(\hat{\Pi}_n {\hat{\rho}} \hat{\Pi}_n, \hat{\Pi}_n \hat\sigma \hat{\Pi}_n) = A(\sum_n \hat{\Pi}_n {\hat{\rho}} \hat{\Pi}_n, \sum_n \hat{\Pi}_n \hat\sigma \hat{\Pi}_n)$.
Using these properties, we can show that
\begin{widetext}
\begin{equation}
\begin{aligned}
\sum_n A(\hat{K}_n {\hat{\rho}} \hat{K}^\dagger_n, \hat{K}_n \hat\sigma \hat{K}^\dagger_n) 
&= \sum_n A(\Tr_2 ({\mathbb 1} \otimes \hat{\Pi}_n) \hat{U} ({\hat{\rho}} \otimes \hat\tau_2) \hat{U}^\dagger  ({\mathbb 1} \otimes \hat{\Pi}_n), \Tr_2 ({\mathbb 1} \otimes \hat{\Pi}_n) \hat{U} (\hat\sigma \otimes \hat\tau_2) \hat{U}^\dagger  ({\mathbb 1} \otimes \hat{\Pi}_n)) \\
&\geq \sum_n A(({\mathbb 1} \otimes \hat{\Pi}_n) \hat{U} ({\hat{\rho}} \otimes \hat\tau_2) \hat{U}^\dagger  ({\mathbb 1} \otimes \hat{\Pi}_n), ({\mathbb 1} \otimes \hat{\Pi}_n) \hat{U} (\hat\sigma \otimes \hat\tau_2) \hat{U}^\dagger  ({\mathbb 1} \otimes \hat{\Pi}_n)) \\
&=  A(\sum_n({\mathbb 1} \otimes \hat{\Pi}_n) \hat{U} ({\hat{\rho}} \otimes \hat\tau_2) \hat{U}^\dagger  ({\mathbb 1} \otimes \hat{\Pi}_n), \sum_n ({\mathbb 1} \otimes \hat{\Pi}_n) \hat{U} (\hat\sigma \otimes \hat\tau_2) \hat{U}^\dagger  ({\mathbb 1} \otimes \hat{\Pi}_n)) \\
&\geq  A(\hat{U} ({\hat{\rho}} \otimes \hat\tau_2) \hat{U}^\dagger, \hat{U} (\hat\sigma \otimes \hat\tau_2) \hat{U}^\dagger) \\
&=  A({\hat{\rho}} \otimes \hat\tau_2, \hat\sigma \otimes \hat\tau_2) \\&=  A({\hat{\rho}}, \hat\sigma) A(\hat\tau_2,\hat\tau_2) \\&=  A({\hat{\rho}}, \hat\sigma).
\end{aligned}
\end{equation}
\end{widetext}
\end{proof}
\section{Proof of Theorem.~\ref{Aa}}
\label{AppB}
For the non-degenerate case (i.e. $\lambda_i \neq \lambda_j$ if and only if $i \neq j$), the proof is the same as that for Theorem~\ref{Ca}.
In the case of degeneracy, we write a resource-free state $\hat\sigma = \sum_n p_n {\hat\sigma}_n = \sum_n p_n \sum_{\lambda} \lambda(n) \ket{n, \lambda} \bra{n, \lambda}$, where each $\hat\sigma_n$ is a translationally-covariant state and $\sum_{\lambda} \lambda(n) \ket{n, \lambda} \bra{n, \lambda}$ is its eigndecomposition.
Then we can follow the proof of Theorem~\ref{Ca} if we can always choose a set of bases $\{ \ket{n, \lambda} \}$ which gives ${\cal A}_a (\hat\rho) = 1 - \max_{\{p_n, \lambda(n), \ket{n,\lambda} \}} A^2(\hat\rho, \hat\sigma) = \sum_{\lambda_i \neq \lambda_j} |(\sqrt{\hat{\rho}})_{ij}|^2$.

Now we consider a projection $\hat{P}_n$ onto the states with $\lambda_i = n$. Using this projection, we can block-diagonalize $\sqrt{\hat\rho}$ and  take eigendecomposition of each block $\hat{P}_n (\sqrt{\hat\rho}) \hat{P}_n$ in order to obtain the desired free state.

\section{Increasing of ${\cal A}_{\rm HS}^{(\omega)}({\hat{\rho}})$ by a covariant operation}
\label{AppC}
We give an example of the case of increasing ${\cal A}_{\rm HS}^{(\omega)}({\hat{\rho}})$ by a covariant operation.
Consider the quantum state $\hat\rho = \ket{\psi}\bra{\psi}$, where $\ket{\psi} = 3^{-1/2} (\ket{0} + \ket{1} + \ket{2}) $,
\begin{equation}
\hat\rho = \frac{1}{3}
\left(
\begin{matrix}
1 & 1 & 1 \\
1 & 1 & 1 \\
1 & 1 & 1 
\end{matrix}
\right).
\end{equation}
Then we consider a partially-decohering map on $\omega = \pm1$, which is a translationally-covariant operation.
Under the operation, the state $\hat\rho$ evolves into
\begin{equation}
\Phi(\hat\rho) = \frac{1}{3}
\left(
\begin{matrix}
1 & 0 & 1 \\
0 & 1 & 0 \\
1 & 0 & 1 
\end{matrix}
\right).
\end{equation}
In this case, we can calculate each mode of coherences ${\cal A}_{\rm HS}^{(\omega)}$ for $\hat\rho$ and $\Phi(\hat\rho)$,
${\cal A}^{(\pm 1)}_{\rm HS} (\hat\rho) = 2/9$ and ${\cal A}^{(\pm 1)}_{\rm HS} (\Phi(\hat\rho)) = 0$,
while ${\cal A}^{(\pm 2)}_{\rm HS} (\hat\rho) = 1/9$ and ${\cal A}^{(\pm 2)}_{\rm HS}(\Phi(\hat\rho)) = 1/6$.
Thus, for $\omega = 2$ we note that the mode of coherence is increased by a translationally-covariant operation.

Meanwhile, the total asymmetry decreases under the partial decohering map: ${\cal A}_a(\rho) = \sum_{\omega \in \{ \pm 1 \pm 2\} } {\cal A}_{\rm HS}^{(\omega)} (\hat\rho) = 2/3$, while ${\cal A}_a(\Phi(\hat\rho)) = \sum_{\omega \in \{ \pm 1 \pm 2\} } {\cal A}_{\rm HS}^{(\omega)} (\Phi(\hat\rho)) =  1/3$.

\section{Proof of Theorem.~\ref{QM1}}
\label{AppD}
We first show that the following construction is possible using the modes of asymmetry.
\begin{proposition} [Weighted measure of asymmetry]
\begin{equation}
\sum_{\omega \in \Omega} \left( 1 - e^{-i\omega x} \right) {\cal A}^{(\omega)}_{\rm HS} ({\hat{\rho}})
\label{WeightedNC}
\end{equation}
is a convex measure and monotone under covariant operations for every $x$.
\end{proposition}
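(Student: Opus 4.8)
The plan is to recognize the sum in (\ref{WeightedNC}) as a Hellinger distance between $\hat\rho$ and its translate ${\cal U}_x(\hat\rho)$, and then to read off all three properties from machinery already established above. First I would fix the eigenbasis $\{\ket{i}\}$ of $\hat{L}$ and record two identities: since $\sqrt{\hat\rho}$ is Hermitian, $\sum_{\omega\in\Omega}{\cal A}^{(\omega)}_{\rm HS}(\hat\rho)=\sum_{i,j}|(\sqrt{\hat\rho})_{ij}|^2=\Tr[(\sqrt{\hat\rho})^2]=1$; and, writing $\hat{U}_x=e^{-ix\hat{L}}$ so that $\sqrt{{\cal U}_x(\hat\rho)}=\hat{U}_x\sqrt{\hat\rho}\,\hat{U}_x^\dagger$,
\begin{equation}
\begin{aligned}
\sum_{\omega\in\Omega}e^{-i\omega x}{\cal A}^{(\omega)}_{\rm HS}(\hat\rho)&=\sum_{i,j}e^{-i(\lambda_i-\lambda_j)x}\,|(\sqrt{\hat\rho})_{ij}|^2\\
&=\Tr\big[\hat{U}_x\sqrt{\hat\rho}\,\hat{U}_x^\dagger\sqrt{\hat\rho}\big]=A\big(\hat\rho,{\cal U}_x(\hat\rho)\big),
\end{aligned}
\end{equation}
the last trace being real and nonnegative as the trace of a product of two positive operators. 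Subtracting these (the $\omega=0$ term contributes nothing, its coefficient being zero), the quantity in (\ref{WeightedNC}) equals $1-A(\hat\rho,{\cal U}_x(\hat\rho))=\tfrac12\Tr\big[(\sqrt{\hat\rho}-\sqrt{{\cal U}_x(\hat\rho)})^2\big]=D_H(\hat\rho,{\cal U}_x(\hat\rho))$, which I will denote $\Psi_x(\hat\rho)$. (In infinite dimension the rearrangement is justified by absolute convergence, immediate from $\sum_\omega{\cal A}^{(\omega)}_{\rm HS}(\hat\rho)=1$.)

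Given this closed form, the ``measure'' properties are immediate. Nonnegativity is $\Psi_x(\hat\rho)=D_H(\hat\rho,{\cal U}_x(\hat\rho))\ge0$, and if $\hat\rho$ is translationally covariant then ${\cal U}_x(\hat\rho)=\hat\rho$ (equivalently $\hat\rho=\hat\rho^{(0)}$, which annihilates every mode $\omega\neq0$), so $\Psi_x(\hat\rho)=0$. For convexity I would write $\Psi_x(\hat\rho)=1-\Tr\big[\hat{U}_x^\dagger\sqrt{\hat\rho}\,\hat{U}_x\sqrt{\hat\rho}\big]$ and apply Lieb's concavity theorem~\cite{Lieb73} exactly as in the proof of (C3) for ${\cal C}_a$, with the projector $\hat{P}_i$ there replaced by the unitary $\hat{U}_x$: the map $\hat\rho\mapsto\Tr\big[\hat{U}_x^\dagger\hat\rho^{1/2}\hat{U}_x\hat\rho^{1/2}\big]$ is concave, so $\Psi_x$ is convex.

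For monotonicity under a trace-preserving covariant operation ${\cal E}$, I would use covariance in the form ${\cal U}_x\circ{\cal E}={\cal E}\circ{\cal U}_x$ (which is equivalent, by Proposition~\ref{ECov}, to the mode condition) to obtain ${\cal U}_x({\cal E}(\hat\rho))={\cal E}({\cal U}_x(\hat\rho))$, whence $\Psi_x({\cal E}(\hat\rho))=D_H\big({\cal E}(\hat\rho),{\cal E}({\cal U}_x(\hat\rho))\big)=1-A\big({\cal E}(\hat\rho),{\cal E}({\cal U}_x(\hat\rho))\big)$; the monotonicity of the affinity, $A(\hat\rho,\hat\tau)\le A({\cal E}(\hat\rho),{\cal E}(\hat\tau))$, already invoked in the proof of (C2a), then yields $\Psi_x({\cal E}(\hat\rho))\le1-A(\hat\rho,{\cal U}_x(\hat\rho))=\Psi_x(\hat\rho)$. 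If a selective (Kraus-wise) version is desired, it follows by the same route as Appendix~\ref{AppA}, combining $A(\hat\rho,\hat\tau)\le\sum_nA(\hat{K}_n\hat\rho\hat{K}_n^\dagger,\hat{K}_n\hat\tau\hat{K}_n^\dagger)$ with the fact that a covariant Kraus operator intertwines with $\hat{U}_x$ up to a phase. I expect the only real obstacle to be the bookkeeping in the first step — fixing signs, confirming reality/Hermiticity of the trace expression, and checking convergence of the sum over $\Omega$ — since once $\Psi_x(\hat\rho)=D_H(\hat\rho,{\cal U}_x(\hat\rho))$ is established, convexity and monotonicity are inherited essentially verbatim from the tools developed for Theorem~\ref{Ca}.
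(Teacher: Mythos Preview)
Your proposal is correct and follows essentially the same route as the paper: both identify the weighted sum (\ref{WeightedNC}) with the Hellinger distance $D_H(\hat\rho,{\cal U}_x(\hat\rho))=1-\Tr[\sqrt{\hat\rho}\,e^{-i\hat{L}x}\sqrt{\hat\rho}\,e^{i\hat{L}x}]$ by expanding in the eigenbasis of $\hat{L}$. The only difference is one of packaging: the paper simply cites Ref.~\cite{MarvianNC} for the fact that $D_H(\hat\rho,{\cal U}_x(\hat\rho))$ is convex and monotone under covariant operations, whereas you supply these properties explicitly by reusing the Lieb-concavity and affinity-monotonicity arguments from Theorem~\ref{Ca}, which makes your version more self-contained but is not a genuinely different argument.
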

\begin{proof}
We note that the Hellinger distance between a quantum state and its symmetric transformation
\begin{equation}
D_H({\hat{\rho}},{\cal U}_x ({\hat{\rho}})) = 1 - \Tr[\sqrt{\hat{\rho}} e^{-i\hat{L}x} \sqrt{\hat{\rho}} e^{i\hat{L}x}]
\end{equation}
is a measure of asymmetry, i.e. convex and non-increasing under translationally-covariant operations \cite{MarvianNC} for any $x \in {\rm I\!R}$.
Then by direct expansion on the eigenbasis of $\hat{L}$, we get $D_H({\hat{\rho}},{\cal U}_x ({\hat{\rho}})) = 1 - \sum_{i,j} |(\sqrt{{\hat{\rho}}})_{ij}|^2 e^{-i(\lambda_i-\lambda_j)x} =  \sum_{\omega \in \Omega} (1- e^{-i \omega x}) {\cal A}^{(\omega)}_a({\hat{\rho}})$.
\end{proof}

Note that ${\cal A}^{(-\omega)}_{\rm HS} ({\hat{\rho}}) = {\cal A}^{(\omega)}_{\rm HS} ({\hat{\rho}})$ by the hermicity of the density matrix, so the above quantity will always give rise to real values.
To make this explicit, we may alternatively perform the sum over $\Omega^+$, which is the set of positive $\omega$ in $\Omega$. We then have $\sum_{\omega \in \Omega} (1- e^{-i \omega x}) {\cal A}^{(\omega)}_{\rm HS} ({\hat{\rho}}) = x^2  \sum_{\omega \in \Omega^+} \omega^2 [ {\rm sinc}(\omega x /2) ]^2 {\cal A}^{(\omega)}_{\rm HS}$.
Then we note that the integration on $x$ with multiplying a well-defined function $g(x)/x^2 \geq 0$,
$$
\begin{aligned}
&\int dx  g(x) \sum_{\omega \in \Omega^+} \omega^2 [ {\rm sinc}(\omega x /2) ]^2 {\cal A}^{(\omega)}_{\rm HS} \\
&\quad = \sum_{\omega \in \Omega^+} \omega^2  \int dx [ {\rm sinc}(\omega x /2) ]^2  g(x)  {\cal A}^{(\omega)}_{\rm HS} 
\end{aligned}
$$
does not change the monotonicity and convexity.
Finally by defining $f(\omega) =  \omega^2 \int dx [ {\rm sinc}(\omega x /2) ]^2  g(x)$, ${\cal M}_{\rm HS}(\hat\rho) = \sum_{\omega \in \Omega^+} f(\omega) {\cal A}_{\rm HS}^{(\omega)} (\hat\rho)$ becomes a convex measure, which is monotone under covariant operations.

\section{Proof of the bound Eq.~(\ref{NEQ})}
\label{AppE}
Using the relation between the fidelity and the affinity \cite{Luo04}, we note that 
\begin{equation}
\frac{1}{2} D_B({\hat{\rho}},\Phi_\sigma({\hat{\rho}})) \leq D_H({\hat{\rho}},\Phi_\sigma({\hat{\rho}})).
\end{equation}
The first inequality of Eq.~(\ref{NEQ}) can then be proved by
\begin{equation}
\begin{aligned}
D_H({\hat{\rho}},\Phi_\sigma({\hat{\rho}})) 
&= 1 - {\rm Tr} \sqrt{\hat\rho} \sqrt{  \int dx \hat{Q}^\sigma_x \hat\rho \hat{Q}^{\sigma\dagger}_x} \\
&\leq 1 - \int dx {\rm Tr} \sqrt{\hat\rho} \hat{Q}^\sigma_x \sqrt{\hat\rho} \hat{Q}^{\sigma\dagger}_x \\
&= 1 - \sum_{i,j} \int dx \sqrt{q^\sigma_i(x) q^\sigma_j(x)} {\rm Tr} \sqrt{\hat\rho} \hat{P}_i \sqrt{\hat\rho} \hat{P}_j \\
&= 1 - \sum_{i,j} e^{-\frac{(\lambda_i - \lambda_j)^2}{8\sigma^2}}  {\rm Tr} \sqrt{\hat\rho} \hat{P}_i \sqrt{\hat\rho} \hat{P}_j \\
&= \sum_{\omega \in \Omega} \left[ 1 -  e^{-\frac{\omega^2}{8\sigma^2}} \right] \sum_{\lambda_i - \lambda_j = \omega} {\rm Tr} \sqrt{\hat\rho} \hat{P}_i \sqrt{\hat\rho} \hat{P}_j \\
&={\cal M}_\sigma(\hat\rho),
\end{aligned}
\end{equation}
where the inequality comes from operator Jensen's inequality \cite{Hansen03} and noting that $\hat{Q}^\sigma_x = \hat{Q}^{\sigma\dagger}_x$ are unital operators. Also note that 
$\int dx \sqrt{q^\sigma_i(x) q^\sigma_j(x)} = \exp[-(\lambda_i - \lambda_j)^2/(8\sigma^2)]$ for $q_i^\sigma(x) = (\sqrt{2 \pi} \sigma)^{-1} e^{-(x-\lambda_i)^2/(2\sigma^2)}$ and $\sum_{\lambda_i - \lambda_j = \omega} {\rm Tr} \sqrt{\hat\rho} \hat{P}_i \sqrt{\hat\rho} \hat{P}_j = \sum_{\lambda_i - \lambda_j = \omega} |(\sqrt{\hat\rho})_{ij}|^2 = {\cal A}_{\rm HS}^{(\omega)}(\hat\rho)$.

The second inequality holds by Jensen's inequality
\begin{equation}
\begin{aligned}
{\cal M}_\sigma(\hat\rho)
&= \sum_{\omega \in \Omega} \left[  1 - e^{-\frac{\omega^2}{8\sigma^2}} \right] {\cal A}^{(\omega)}_{\rm HS} ({\hat{\rho}}) \\
&\leq 1 - e^{ -\sum_{\omega \in \Omega} \frac{\omega^2}{8\sigma^2} {\cal A}^{(\omega)}_{\rm HS} (\hat\rho)}  \\
&=  1-e^{-\frac{I_W({\hat{\rho}},\hat{L})}{4\sigma^2}},
\end{aligned}
\end{equation}
where $ 1 - e^{-\frac{\omega^2}{8\sigma^2}}$ is a concave function of $\omega^2$.

\end{document}